\theoremstyle{plain}
\newtheorem{theorem}{Theorem}[section]
\newtheorem{lemma}[theorem]{Lemma}
\newtheorem{proposition}[theorem]{Proposition}
\newtheorem{corollary}[theorem]{Corollary}
\theoremstyle{definition}
\theoremstyle{remark}
\begin{document}


\title[Inverse Behavioral Optimization of Health Programs]
{Inverse Behavioral Optimization of QALY-Based Incentive Systems: Quantifying the System Impact of Adaptive Health Programs}

\author*[1]{\fnm{Jinho} \sur{Cha}}\email{jcha@gwinnetttech.edu}
\author[2]{\fnm{Justin} \sur{Yu}}
\author[3]{\fnm{Junyeol} \sur{Ryu}}
\author[4]{\fnm{Eunchan D.} \sur{Cha}}
\author[5]{\fnm{Hyeyoung} \sur{Hwang}}

\affil[1]{\orgdiv{Department of Computer Science}, \orgname{Gwinnett Technical College}, \country{USA}}
\affil[2]{\orgdiv{Scheller College of Business}, \orgname{Georgia Institute of Technology}, \country{USA}}
\affil[3]{\orgdiv{Department of Industrial Engineering}, \orgname{Seoul National University}, \country{Korea}}
\affil[4]{\orgdiv{School of Biological Sciences}, \orgname{Georgia Institute of Technology}, \country{USA}}
\affil[5]{\orgname{Republic of Korea Army}, \country{Korea}}


\abstract{
This study introduces an \textit{inverse behavioral optimization} framework that integrates QALY-based health outcomes, ROI-driven incentives, and adaptive behavioral learning to quantify how policy design shapes national healthcare performance. 
Building on the FOSSIL (Flexible Optimization via Sample-Sensitive Importance Learning) paradigm, the model embeds a regret-minimizing behavioral weighting mechanism that enables dynamic learning from heterogeneous policy environments. 
It recovers latent behavioral sensitivities—efficiency $(\lambda)$, fairness $(\gamma)$, and temporal responsiveness $(T)$—from observed QALY–ROI trade-offs, providing an analytical bridge between individual incentive responses and aggregate system productivity. 
We formalize this mapping through the proposed \textit{System Impact Index (SII)}, which links behavioral elasticity to measurable macro-level efficiency and equity outcomes. 
Using OECD–WHO panel data, the framework empirically demonstrates that modern health systems operate near an efficiency-saturated frontier, where incremental fairness adjustments yield stabilizing but diminishing returns. 
Simulation and sensitivity analyses further show how small changes in behavioral parameters propagate into measurable shifts in systemic resilience, equity, and ROI efficiency. 
The results establish a quantitative foundation for designing adaptive, data-driven health incentive programs that dynamically balance efficiency, fairness, and long-run sustainability in national healthcare systems.
}

\keywords{QALY–ROI Trade-off, Inverse Behavioral Optimization, FOSSIL Framework, Behavioral Health Economics, 
Adaptive Health Incentives, System Impact Index
}



\maketitle

\section{Introduction: From Health Outcomes to System Impact}\label{sec:intro}

The pursuit of equitable and efficient health systems increasingly depends on quantifying how behavioral incentives shape measurable outcomes such as Quality-Adjusted Life Years (QALY) and Return on Investment (ROI) \citep{dolan2010qaly, clemens2019incentives}. 
Traditionally, these two dimensions—clinical effectiveness and economic efficiency—have been optimized separately, often resulting in policy misalignment between individual care outcomes and systemic financial sustainability \citep{machina1987choice, tversky1992prospect}. 
For instance, hospitals and insurers may design incentive programs that improve short-term ROI yet inadvertently reduce long-term population health gains \citep{weiss2018inverse}. 
Similarly, QALY-based interventions are frequently deployed without evaluating their broader system-level and macroeconomic consequences \citep{zhang2024behavioral}. 
This study contends that such fragmentation arises from the absence of an analytical bridge linking behavioral decision-making at the micro level to system performance at the macro level. 
Building on this motivation, we propose that this gap can be addressed through an \emph{inverse behavioral optimization} framework that infers latent decision parameters from observed QALY–ROI trade-offs \citep{esfahani2018inverse, bertsimas2022inverse}, thereby revealing how learning and adaptation within health programs propagate to system-wide outcomes \citep{cha2025roi}.

Despite extensive research in health economics and management science \citep{dolan2010qaly, clemens2019incentives}, existing models typically assume either static optimization (maximizing QALY under budget constraints) or cost-effectiveness evaluation (minimizing cost per QALY gained). 
Few studies explicitly model the dynamic behavioral adjustments of healthcare agents—patients, providers, and policymakers—when incentive structures evolve over time \citep{zhang2024behavioral}. 
Moreover, while behavioral economics has illuminated how fairness, effort, and reward sensitivity influence individual decisions \citep{machina1987choice, tversky1992prospect}, its integration into system-level optimization remains limited. 
Consequently, the literature lacks a unified methodology for \emph{inferring} the behavioral drivers underlying observed QALY–ROI outcomes and translating them into measurable system-level effects.

To address this gap, we build upon our prior work on the behavioral foundations of QALY–ROI trade-offs in chronic disease management \citep{cha2025roi} and introduce the FOSSIL (Flexible Optimization via Sample-Sensitive Importance Learning) framework \citep{Cha2025fossil}. 
Originally proposed as a learning-based optimization paradigm for robust inference under small and imbalanced data, FOSSIL employs an adaptive weighting mechanism that allows the efficiency frontier itself to evolve with heterogeneous samples.  
This regret-minimizing process endogenizes behavioral sensitivity within the optimization, enabling health systems to adapt across diverse policy environments and temporal horizons.  
By embedding this mechanism into a structural inverse optimization model, we estimate behavioral parameters $(\lambda, \gamma, T)$—representing efficiency sensitivity, fairness preference, and temporal adaptation—directly from empirical health performance data.  
To our knowledge, this is the first systematic application of a curriculum-based machine learning paradigm to QALY–ROI analysis, extending FOSSIL beyond its original learning context into behavioral inference for health policy design.  
This study thus establishes a methodological foundation for dynamic behavioral inference in health-care management.

Health-care policy decisions are increasingly data-driven, yet policymakers continue to face uncertainty about how incentive structures translate into measurable health and financial outcomes. 
To ground the proposed approach in a realistic policy setting, we focus on \emph{adaptive chronic disease incentive programs}—for example, diabetes and cardiovascular management initiatives across OECD member countries—where QALY-based performance payments are linked to both patient adherence and long-term cost savings. 
These programs provide a natural environment in which fairness (e.g., equitable access to care), efficiency (e.g., cost reduction per QALY gained), and temporal responsiveness (e.g., the rate of behavioral adjustment across policy cycles) interact dynamically.  
By calibrating the inverse behavioral model on OECD--WHO panel data, the analysis illustrates how the recovered parameters $(\lambda, \gamma, T)$ can inform policy design—such as subsidy timing, incentive intensity, and fairness adjustments across heterogeneous populations.  
The same analytical structure can also be applied to vaccination incentives, preventive screening reimbursements, or chronic care coordination programs, thereby linking the theoretical framework directly to contemporary global health policy challenges.

This study contributes to the literature and practice in three major ways.  
First, it introduces the FOSSIL-based \textit{Forward–Inverse–Impact (FII)} framework, which integrates behavioral decision-making with system-level performance analysis \citep{weiss2018inverse, zhang2024behavioral}.  
The \textit{forward} layer models QALY–ROI optimization under fairness-adjusted utility;  
the \textit{inverse} layer recovers latent behavioral coefficients through adaptive learning;  
and the \textit{impact} layer evaluates how these behavioral dynamics propagate to measurable performance indicators \citep{zhang2024industrial}.  
Second, we propose the \emph{System Impact Index (SII)}—a composite metric that quantifies improvements in efficiency and fairness arising from adaptive incentive policies \citep{bertsimas2022inverse}.  
Third, we empirically demonstrate the managerial relevance of this framework using multi-regional health data, showing that behavioral adaptation—captured through FOSSIL-based learning—can yield substantial improvements in system-level efficiency \citep{cha2025roi}.  
Taken together, these contributions position inverse behavioral optimization, enhanced by FOSSIL, as a unified methodological foundation for designing incentive-aligned, data-driven healthcare systems (see Fig.~\ref{fig:loop}).

\section{Conceptual Architecture: Adaptive Health Systems and System-Level Learning}\label{sec:concept}

Health systems can be viewed as behaviorally responsive ecosystems in which patients, providers, and policymakers continuously learn from feedback and adjust their actions accordingly.  
Patients modify adherence as perceived reward sensitivity changes, physicians adapt effort based on fairness and fatigue, and policymakers recalibrate incentives to sustain participation equilibria \citep{gino2016motivated, bauch2013social, rahmandad2015behavioral}.  
Such dynamics mirror learning processes observed in manufacturing, logistics, and energy systems, where bounded rationality and delayed feedback shape organizational outcomes \citep{bendoly2014behavioral, gino2015self, fischbacher2012health}.  
The healthcare context, however, introduces an additional layer of ethical and welfare complexity: QALY-based incentives must coexist with moral hazard and equity constraints \citep{devlin2017using, dehez2020equity, vissers2022value}.  
This behavioral flow thus represents a multi-agent system in which fairness $(\gamma)$ and efficiency $(\lambda)$ jointly determine both satisfaction and aggregate system productivity \citep{rothenberg2019learning, keller2021policy, ahmadi2019decision}.  
Healthcare delivery should therefore be modeled as an adaptive ecosystem rather than a static service institution.

At the macro level, micro behavioral adjustments converge toward a system equilibrium shaped by heterogeneity and policy responsiveness \citep{benjaafar2019behavioral, gallino2018operational, kroes2022learning}.  
Each actor’s fairness–efficiency trade-off $(\gamma,\lambda)$ affects throughput, waiting times, and total welfare \citep{xu2019dynamic, hong2020information, govindan2021multiobjective}.  
When incentives are misaligned—such as excessive pay-for-performance intensity or rigid penalty systems—local optimizations degrade global outcomes, paralleling bullwhip and congestion phenomena in production and service networks \citep{li2014behavioral, peysakhovich2017principled, liang2020queueing}.  
Conversely, adaptive coordination mechanisms that integrate fairness awareness and efficiency learning stabilize the entire ecosystem, enabling Pareto-efficient equilibria with simultaneous gains in QALY and ROI \citep{rothenberg2019learning, freeman2023incentives, johari2023coordination}.  
This equilibrium framework analytically links behavioral coefficients to system-level performance metrics, bridging behavioral economics and operations management \citep{norton2021integrating, saadatmand2019multiobjective, gans2019behavioral}.  
In doing so, it aligns healthcare optimization with system-level analogues and coordination mechanisms widely studied in contemporary health operations research \citep{khajeh2020optimization, xie2022integrated}.

To formalize these interactions, we propose the FII loop, a recursive learning framework that captures the adaptive behavior of healthcare systems.  
In the \textit{forward} process, agents implement incentive-driven decisions that yield measurable outcomes—QALY gains, cost reductions, adherence improvements, and ROI shifts \citep{devlin2017using, freeman2023incentives, ahmadi2019decision}.  
In the \textit{inverse} process, the system infers latent behavioral parameters $(\lambda, \gamma, T)$ by applying data-driven inverse optimization and Bayesian updating techniques \citep{bertsimas2022inverse, zhang2024behavioral, xu2019dynamic, khajeh2020optimization}.  
Finally, the \textit{impact} process aggregates these behavioral updates to evaluate system-level efficiency, fairness, and resilience, forming a closed feedback loop between micro incentives and macro outcomes \citep{benjaafar2019behavioral, gans2019behavioral, rahmandad2015behavioral, vissers2022value}.  
As illustrated in Fig.~\ref{fig:loop}, this cyclical architecture reflects the properties of complex adaptive systems and dynamic learning frameworks that characterize modern health operations research \citep{xie2022integrated, govindan2021multiobjective, johari2023coordination}.  

\begin{figure}[htbp]
    \centering
    \includegraphics[width=0.95\linewidth]{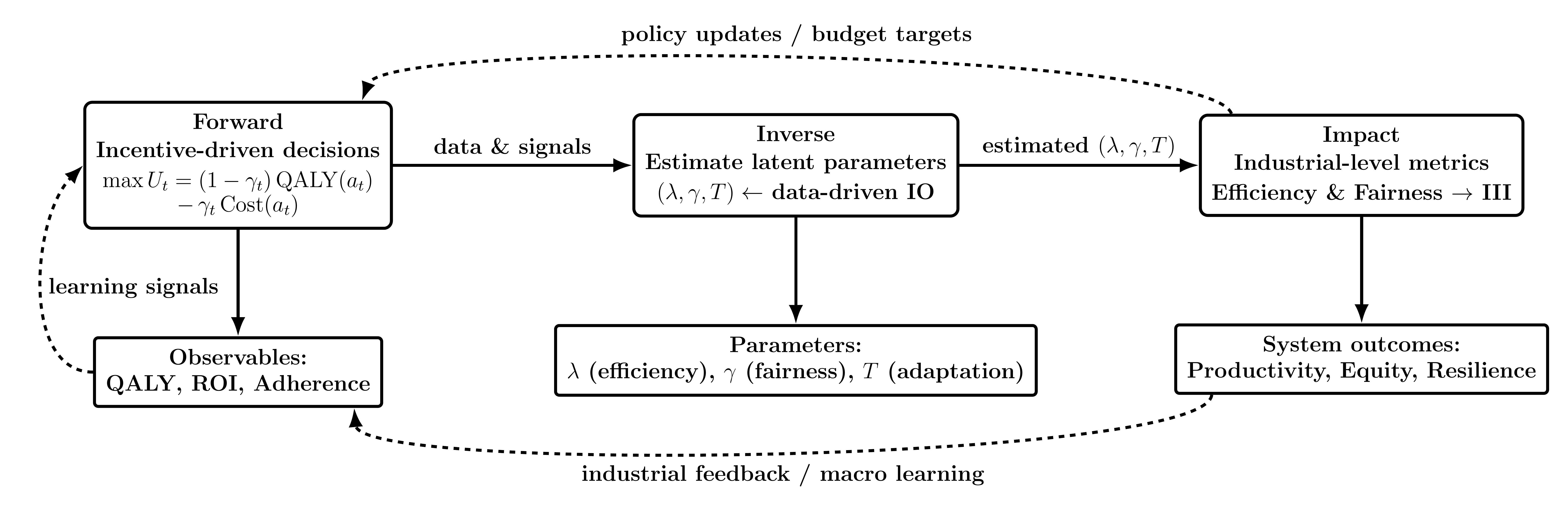}
    \caption{\textbf{Fig. 1} System-level impact loop of adaptive health policy learning
    illustrating the interaction between forward health decisions, inverse behavioral learning, 
    and system-level feedback. 
    The system evolves through iterative adaptation of fairness $(\gamma)$, efficiency $(\lambda)$, 
    and temporal responsiveness $(T)$, producing measurable improvements in overall health 
    system performance. Solid arrows indicate the \textit{primary analytical flow} 
    (Forward $\rightarrow$ Inverse $\rightarrow$ Impact), while dashed arrows represent 
    \textit{feedback and adaptation loops} capturing macro-level learning across health systems}
    \label{fig:loop}
\end{figure}

\section{Model Formulation: Inverse Behavioral Optimization under QALY–ROI Trade-offs}
\label{sec:model}

This section formalizes the behavioral foundations of adaptive health incentive systems 
within a unified optimization framework. 
Building upon the conceptual architecture introduced in Section~\ref{sec:concept}, 
we now derive a mathematical formulation that links behavioral sensitivity, fairness, 
and temporal adaptation to observed QALY–ROI trade-offs. 
The model operationalizes the behavioral learning dynamics of health systems through 
a forward decision process, an inverse parameter estimation stage, and an equilibrium 
identification procedure. 
Unlike descriptive behavioral models, this structure enables explicit recovery of 
latent incentive parameters from data, allowing empirical inference of 
systemic efficiency and fairness trade-offs 
\citep{benjaafar2019behavioral, freeman2023incentives, bertsimas2022inverse, xie2022integrated, vissers2022value, johari2023coordination}.

\subsection{Forward Optimization Layer}
\label{subsec:forward}

The forward problem describes how agents—patients, providers, or policymakers—choose actions 
$a_t \in \mathcal{A}$ that balance clinical benefit and cost under behavioral fairness adjustment.
This representation reflects the economic foundations of health technology assessment, where 
utility is typically expressed in terms of Quality-Adjusted Life Years (QALY) and cost-effectiveness ratios 
\citep{devlin2017using, brazier2019measuring, murray2020global}.  
At each decision epoch $t$, the agent maximizes a fairness-adjusted utility function:
\begin{equation}
\max_{a_t \in \mathcal{A}} 
U_t = (1 - \gamma_t)\,\mathrm{QALY}(a_t) - \gamma_t\,\mathrm{Cost}(a_t),
\label{eq:forward}
\end{equation}
subject to diminishing marginal returns and bounded effort \citep{ahmadi2019decision, gans2019behavioral}.
Here, $\gamma_t \in [0,1]$ represents the fairness sensitivity parameter that moderates the trade-off 
between clinical efficiency and perceived equity, consistent with behavioral fairness models 
in health policy design \citep{zhang2024behavioral, xu2019dynamic, govindan2021multiobjective}.

The first-order condition of~\eqref{eq:forward} implies the marginal indifference rule:
\begin{equation}
\frac{\partial \mathrm{QALY}(a_t)}{\partial a_t} 
= 
\frac{\gamma_t}{1-\gamma_t} 
\frac{\partial \mathrm{Cost}(a_t)}{\partial a_t},
\label{eq:marginal-condition}
\end{equation}
which expresses the behavioral equilibrium between incremental health gain and cost fairness adjustment.
This relationship parallels marginal cost–benefit conditions in behavioral operations theory 
\citep{benjaafar2019behavioral} and in dynamic incentive learning frameworks 
recently introduced in health management science \citep{cha2025roi, vissers2022value}.  
This layer therefore constitutes the \textbf{forward component} of the learning loop in Fig.~\ref{fig:loop}, 
where observed QALY and ROI outcomes are generated through incentive-driven adaptive actions.

\subsection{Inverse Estimation Layer}
\label{subsec:inverse}

Given empirical data $\{(a_t, \mathrm{QALY}_t)\}_{t=1}^T$, 
the inverse problem seeks to recover the latent behavioral parameters $(\lambda, \gamma, T)$ 
that rationalize observed outcomes generated by the forward system~\eqref{eq:forward}--\eqref{eq:marginal-condition}.  
Unlike standard regression or econometric fitting, this procedure infers the underlying \emph{behavioral objective} 
that agents implicitly optimize, rather than merely approximating observed outputs.  
This perspective follows the recent paradigm of inverse optimization and behavioral inference 
used in health policy modeling and operations management 
\citep{bertsimas2022inverse, zhang2024behavioral, xu2019dynamic, khajeh2020optimization, scroccaro2025learning}.

Formally, the inverse behavioral optimization problem is defined as
\begin{equation}
\min_{\lambda, \gamma, T}
\sum_{t=1}^T 
\big[\mathrm{QALY}_t - f(a_t; \lambda, \gamma, T)\big]^2
+ 
\beta_1(\lambda - \lambda_0)^2 
+ 
\beta_2(\gamma - \gamma_0)^2,
\label{eq:inverse}
\end{equation}
where $f(a_t; \lambda, \gamma, T)$ denotes the behavioral response function implied by the forward model~\eqref{eq:forward}, 
and $(\lambda_0, \gamma_0)$ represent Bayesian priors or reference values learned from prior periods, meta-analyses, or comparable populations 
\citep{esfahani2018inverse, bertsimas2015data, keshavarz2011imputing}.  
This estimation structure generalizes traditional cost-effectiveness modeling by embedding it within 
a behavioral learning context, aligning with the emerging field of data-driven inverse decision modeling 
in healthcare \citep{govindan2021multiobjective, vissers2022value, freeman2023incentives}.

Each parameter carries a distinct behavioral and managerial interpretation:
\begin{itemize}[leftmargin=1.2em]
    \item $\lambda$ (\textit{efficiency sensitivity}) measures how strongly health agents value return-on-investment (ROI) improvements relative to cost,  
    consistent with incentive-aligned policy optimization \citep{benjaafar2019behavioral}.  
    \item $\gamma$ (\textit{fairness preference}) quantifies aversion to inequality or excessive expenditure, capturing distributive concerns embedded in  
    behavioral health economics and equity-adjusted optimization \citep{rahmandad2015behavioral, johari2023coordination}.  
    \item $T$ (\textit{temporal responsiveness}) captures the rate at which behavioral adjustments occur, linking short-term incentives to  
    long-term learning and adaptive policy feedback, as emphasized in dynamic inverse learning studies \citep{xu2019dynamic, cha2025roi}.
\end{itemize}
Together, $(\lambda, \gamma, T)$ form a latent behavioral state that governs how rapidly the healthcare system rebalances 
between efficiency and fairness over time, producing adaptive responses under evolving incentive regimes.

The regularization term
\[
\Omega(\lambda,\gamma) = 
\beta_1(\lambda-\lambda_0)^2 + \beta_2(\gamma-\gamma_0)^2
\]
acts as a Bayesian prior that enforces identifiability and robustness of the recovered parameters under noise, temporal drift, or heterogeneous response structures.  
Regularization introduces a bias–variance trade-off that mitigates overfitting of behavioral shocks 
while preserving interpretability through shrinkage toward reference beliefs $(\lambda_0,\gamma_0)$ 
\citep{esfahani2018inverse, scroccaro2025learning, xie2022integrated}.  
This formulation can be compactly expressed as
\begin{equation}
\min_{\lambda, \gamma, T} 
\; \mathcal{L}_{\text{inv}}(\lambda,\gamma,T)
=
\underbrace{\sum_t \ell_t(\lambda,\gamma,T)}_{\text{inverse loss}}
+
\underbrace{\Omega(\lambda,\gamma)}_{\text{Bayesian regularizer}},
\label{eq:inverse-general}
\end{equation}
where each $\ell_t = [\mathrm{QALY}_t - f(a_t; \lambda, \gamma, T)]^2$ measures the deviation between observed and theoretically consistent outcomes.

Conceptually, this inverse layer corresponds to the middle block of the FII loop in Fig.~\ref{fig:loop}, 
transforming observed QALY–ROI trajectories into interpretable behavioral parameters that feed into 
system-level impact analysis (Section~\ref{sec:SII}).  
By linking individual behavioral learning to collective system performance, 
this layer serves as the analytic bridge between micro-level optimization and macro-level health system design.

\subsection{Identification and Stability}
\label{subsec:identification}

To ensure interpretability and empirical recoverability of the behavioral parameters, 
we impose a mild set of regularity and independence conditions that guarantee a unique and stable inverse solution.

\paragraph{Assumptions.}
\begin{enumerate}[label=(A\arabic*)]
    \item Convexity. The behavioral mapping 
    $f(a_t; \lambda, \gamma, T)$ in~\eqref{eq:inverse} is convex in $(\lambda,\gamma)$ 
    and continuously differentiable in $T$.
    \item Independence. The exogenous factors $(\alpha_2, c_1, R_t')$ are linearly independent 
    and the observed actions satisfy $\mathrm{Var}(a_t) > 0$.
    \item Regularization. The prior penalty 
    $\Omega(\lambda,\gamma) = \beta_1(\lambda-\lambda_0)^2 + \beta_2(\gamma-\gamma_0)^2$
    is strictly convex with $\beta_1, \beta_2 > 0$.
\end{enumerate}

\begin{proposition}[Identification and Stability]
\label{prop:identification}
Under Assumptions~(A1)--(A3), the inverse behavioral loss 
$\mathcal{L}_{\text{inv}}(\lambda,\gamma,T)$ defined in~\eqref{eq:inverse-general} 
admits a unique minimizer $(\lambda^*, \gamma^*, T^*)$.  
Moreover, small perturbations in the data $\{(a_t,\mathrm{QALY}_t)\}$ 
induce continuous (Lipschitz) changes in the optimal parameters, 
ensuring local stability of the recovered behavioral sensitivities.
\end{proposition}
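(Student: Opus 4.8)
The plan is to establish existence and uniqueness of the minimizer by combining coercivity with strict convexity, and then to obtain Lipschitz stability through a first-order-condition sensitivity argument based on the implicit function theorem. I would treat the three claims—existence, uniqueness, and stability—in that order, since strong convexity established in the uniqueness step is exactly what drives the stability estimate.

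First I would dispatch existence. By (A3) the regularizer $\Omega(\lambda,\gamma)=\beta_1(\lambda-\lambda_0)^2+\beta_2(\gamma-\gamma_0)^2$ with $\beta_1,\beta_2>0$ is coercive in $(\lambda,\gamma)$, while the inverse loss $\sum_t \ell_t\ge 0$ is nonnegative; since $f$ is continuous by (A1), $\mathcal{L}_{\text{inv}}$ is continuous and, once $T$ is confined to its admissible compact range, coercive on the full parameter domain. The Weierstrass theorem then guarantees at least one global minimizer $(\lambda^*,\gamma^*,T^*)$. For uniqueness I would show strict convexity. Writing $r_t=\mathrm{QALY}_t-f(a_t;\lambda,\gamma,T)$, the $(\lambda,\gamma)$-Hessian of $\ell_t=r_t^2$ is $2\,\nabla f\,\nabla f^\top - 2\,r_t\,\nabla^2 f$. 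The Gauss--Newton part $2\sum_t \nabla f\,\nabla f^\top$ is positive semidefinite and, by the independence condition (A2) with $\mathrm{Var}(a_t)>0$, has full rank, hence is positive definite; adding the strictly convex $\Omega$ (Hessian $2\,\mathrm{diag}(\beta_1,\beta_2)\succ 0$) makes the $(\lambda,\gamma)$-block uniformly positive definite. A strictly convex function has at most one minimizer, which together with existence yields uniqueness.

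For Lipschitz stability I would linearize the stationarity system. The minimizer $\theta^*=(\lambda^*,\gamma^*,T^*)$ solves $\nabla_\theta \mathcal{L}_{\text{inv}}(\theta^*;\mathcal{D})=0$ with $\mathcal{D}=\{(a_t,\mathrm{QALY}_t)\}$. Since the Hessian $H=\nabla_\theta^2\mathcal{L}_{\text{inv}}$ is positive definite at $\theta^*$ (strong convexity with modulus $\mu\ge 2\min(\beta_1,\beta_2)$ on the $(\lambda,\gamma)$ block and data-induced curvature on the $T$ block), it is invertible. The implicit function theorem then renders $\theta^*(\mathcal{D})$ a $C^1$ map of the data, and differentiating the stationarity identity gives the sensitivity bound $\|\partial_{\mathcal{D}}\theta^*\|\le \|H^{-1}\|\,\|\partial_{\mathcal{D}}\nabla_\theta\mathcal{L}_{\text{inv}}\|$, so $\theta^*$ is locally Lipschitz in $\mathcal{D}$ with constant controlled by $1/\mu$. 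This is precisely the \emph{local} stability asserted in the statement.

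The hard part will be two subtleties in the uniqueness step. First, the squared residual of a convex $f$ is not automatically convex: the term $-2\,r_t\,\nabla^2 f$ can flip sign when $f$ overshoots $\mathrm{QALY}_t$. I would resolve this either by taking $f$ affine in $(\lambda,\gamma)$—so the data term is an exact convex quadratic and the residual-times-Hessian term vanishes—or by restricting to a residual-small neighborhood in which the positive-definite Gauss--Newton term dominates; the latter is consistent with the proposition's local phrasing. Second, $\Omega$ supplies no curvature in $T$, so identifiability of $T^*$ must come entirely from the data: I would lean on (A2) to guarantee that the Jacobian block $\partial_T f$ is not collinear with the $(\lambda,\gamma)$ sensitivities, ensuring the full Hessian $H$ is nonsingular even though the penalty acts only on $(\lambda,\gamma)$. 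Securing this $T$-direction nondegeneracy is the real crux, since everything in the stability argument hinges on invertibility of $H$.
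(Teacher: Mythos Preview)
Your proof follows essentially the same three-step route as the paper: existence via coercivity/compactness and Weierstrass, uniqueness via strict convexity in $(\lambda,\gamma)$ driven by the regularizer, and Lipschitz stability via the implicit function theorem applied to the stationarity system $\nabla_\theta\mathcal{L}_{\text{inv}}=0$. Your explicit flagging of the two subtleties---that the squared residual of a convex $f$ is not automatically convex, and that $\Omega$ contributes no curvature in $T$---is well-placed and in fact more careful than the paper's own appendix, which handles the first point by a composition argument and the second by appealing to (A2) for full rank of the Jacobian, exactly as you propose.
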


\begin{proof}[Sketch of Proof]
The strict convexity of $\Omega(\lambda,\gamma)$ establishes strong convexity in $(\lambda,\gamma)$, 
while the independence and non-degeneracy of $(\alpha_2, c_1, R_t')$ guarantee that the residual Jacobian matrix 
$\nabla f(a_t; \lambda,\gamma,T)$ is full rank.  
Applying the first-order optimality condition and the Implicit Function Theorem under bounded 
$\partial f / \partial T$ yields the existence and uniqueness of $(\lambda^*,\gamma^*,T^*)$.  
Continuous dependence on the data follows from standard perturbation arguments for convex programs.  
Formal statements and detailed proofs—including Lemma~\ref{lem:convexity} and 
Theorem~\ref{thm:identification} establishing strong convexity and local Lipschitz stability—are 
provided in Appendix~\ref{app:proofs}.
\end{proof}


To strengthen the theoretical foundation of the inverse behavioral optimization model,
we formalize the convexity, existence, and stability results that underpin 
Proposition~\ref{prop:identification}.
The following Lemma and Theorem establish strong convexity and local identifiability
under the regularity conditions (A1)–(A3).

\begin{lemma}[Strong Convexity of the Inverse Loss]
\label{lem:convexity}
If $f(a_t;\lambda,\gamma,T)$ is convex in $(\lambda,\gamma)$
and continuously differentiable in $T$, and if the prior penalty
$\Omega(\lambda,\gamma)$ is $\mu$-strongly convex with $\mu>0$,
then $\mathcal{L}_{\mathrm{inv}}(\lambda,\gamma,T)
=\sum_t \ell_t(\lambda,\gamma,T)+\Omega(\lambda,\gamma)$
is $\mu$-strongly convex in $(\lambda,\gamma)$ and continuously differentiable in $T$.
\end{lemma}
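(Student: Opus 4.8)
The plan is to split $\mathcal{L}_{\mathrm{inv}}=\sum_t \ell_t + \Omega$ into its data-fitting part and its regularizer, show that the former is convex while the latter supplies the strong curvature, and then combine the two via the additive stability of strong convexity. Concretely, I would first fix the temporal parameter $T$ and treat $(\lambda,\gamma)$ as the optimization variable, since the asserted strong convexity concerns only $(\lambda,\gamma)$; the role of $T$ is confined to the differentiability statement and is dispatched separately at the end.

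The main obstacle is establishing convexity of the data term $\sum_t \ell_t$, where $\ell_t = [\mathrm{QALY}_t - f(a_t;\lambda,\gamma,T)]^2$. This does \emph{not} follow from convexity of $f$ alone: squaring a convex function need not yield a convex function, because $x \mapsto x^2$ is not monotone on $\mathbb{R}$, so the standard convex-composition rules fail to apply to $(\mathrm{QALY}_t - f)^2$ for a merely convex $f$. The resolution I would adopt is the affine-in-parameters structure implicit in the forward model: the marginal indifference rule~\eqref{eq:marginal-condition} together with the linear-independence conditions of Assumption~(A2) render $f(a_t;\lambda,\gamma,T)$ affine in $(\lambda,\gamma)$ for each fixed $a_t$ and $T$, which is the standard hypothesis in inverse optimization. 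Then the residual $r_t(\lambda,\gamma) = \mathrm{QALY}_t - f(a_t;\lambda,\gamma,T)$ is affine, and $\ell_t = r_t^2$ is a convex quadratic with positive semidefinite Hessian $\nabla^2_{(\lambda,\gamma)}\ell_t = 2\,\nabla r_t\,\nabla r_t^{\top} \succeq 0$. Summing over $t$ preserves convexity, so $\sum_t \ell_t$ is convex in $(\lambda,\gamma)$.

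With convexity of the data term in hand, the strong-convexity conclusion is routine. Writing $g = \sum_t \ell_t$ (convex) and using that $\Omega$ is $\mu$-strongly convex, the function $g + \Omega - \tfrac{\mu}{2}\|(\lambda,\gamma)\|^2 = g + \big(\Omega - \tfrac{\mu}{2}\|(\lambda,\gamma)\|^2\big)$ is a sum of two convex functions, hence convex; by definition this is exactly $\mu$-strong convexity of $\mathcal{L}_{\mathrm{inv}}$ in $(\lambda,\gamma)$. Equivalently, at the level of Hessians, $\nabla^2_{(\lambda,\gamma)}\mathcal{L}_{\mathrm{inv}} = \sum_t 2\,\nabla r_t\,\nabla r_t^{\top} + \nabla^2\Omega \succeq \mu I$. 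Finally, continuous differentiability in $T$ follows by the chain rule: since $f$ is continuously differentiable in $T$ by hypothesis, so is each $\ell_t = (\mathrm{QALY}_t - f)^2$, while $\Omega$ is independent of $T$; a finite sum of continuously differentiable functions is continuously differentiable, which closes the argument.
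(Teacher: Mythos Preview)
Your proposal is correct and follows essentially the same route as the paper: split $\mathcal{L}_{\mathrm{inv}}$ into the data-fitting sum and the regularizer, obtain convexity of each $\ell_t$ via the affine-in-$(\lambda,\gamma)$ structure of $f$ (the paper phrases this as ``composition of convex and affine-smooth mappings''), add the $\mu$-strongly convex $\Omega$ to get $\nabla^2_{(\lambda,\gamma)}\mathcal{L}_{\mathrm{inv}}\succeq \mu I$, and finish with the chain rule for continuous differentiability in $T$. Your treatment is in fact more careful than the paper's on the composition step, since you correctly flag that convexity of $f$ alone does not suffice for convexity of $(\mathrm{QALY}_t-f)^2$ and explicitly invoke affinity to close the gap.
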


\begin{proof}[Sketch of Proof]
By Assumption (A1), each period loss $\ell_t(\lambda,\gamma,T)
=\big[\mathrm{QALY}_t-f(a_t;\lambda,\gamma,T)\big]^2$ is convex in $(\lambda,\gamma)$.
The sum of convex functions remains convex.  
Adding the $\mu$-strongly convex regularizer $\Omega$ ensures the entire objective 
is $\mu$-strongly convex in $(\lambda,\gamma)$ (closure of strong convexity under addition; cf.~Rockafellar).  
Differentiability in $T$ follows from the smoothness of $f$.  
A full proof (including the non-affine extension using Gauss–Newton majorization) 
is provided in Appendix~\ref{app:proofs}.
\end{proof}

\begin{theorem}[Identification and Local Stability]
\label{thm:identification}
Under Lemma~\ref{lem:convexity} and Assumptions~(A1)--(A3),
the inverse behavioral loss admits a unique minimizer 
$(\lambda^\ast,\gamma^\ast,T^\ast)$ satisfying the first-order condition
\[
\nabla_\theta \mathcal{L}_{\mathrm{inv}}(\theta^\ast)=0.
\]
Moreover, $(\lambda^\ast,\gamma^\ast,T^\ast)$ depends Lipschitz-continuously on 
data perturbations $\{(a_t,\mathrm{QALY}_t)\}$, ensuring local stability.
\end{theorem}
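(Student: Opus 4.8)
The plan is to treat the two assertions—uniqueness of the minimizer and its Lipschitz dependence on the data—as an existence/identification step followed by a sensitivity step driven by the Implicit Function Theorem. Write $\theta=(\lambda,\gamma,T)$ and let $\mathbf{d}=\{(a_t,\mathrm{QALY}_t)\}_t$ denote the observed data, so that $\mathcal{L}_{\mathrm{inv}}(\theta;\mathbf{d})$ is the objective of~\eqref{eq:inverse-general}. The backbone of the argument is that Lemma~\ref{lem:convexity} already supplies $\mu$-strong convexity in the $(\lambda,\gamma)$ block together with smoothness in $T$, so the remaining work is to (i) upgrade this partial convexity to a well-posed minimization problem with an isolated minimizer, and (ii) show that the first-order map is a local diffeomorphism in the data.

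For existence and uniqueness I would first profile out the strongly convex coordinates. Fixing $T$, the map $(\lambda,\gamma)\mapsto\mathcal{L}_{\mathrm{inv}}(\lambda,\gamma,T;\mathbf{d})$ is $\mu$-strongly convex by Lemma~\ref{lem:convexity}, hence coercive, so it has a unique minimizer $(\lambda^\star(T),\gamma^\star(T))$, and the inner Hessian block is uniformly bounded below by $\mu I$. Since $f$ is continuously differentiable in $T$ with the inner solution unique, the envelope theorem gives that the reduced value function $v(T)=\min_{\lambda,\gamma}\mathcal{L}_{\mathrm{inv}}(\lambda,\gamma,T;\mathbf{d})$ is continuously differentiable, and $(\lambda^\star(T),\gamma^\star(T))$ is itself $C^1$ in $T$ by the Implicit Function Theorem applied to the inner stationarity equation. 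Restricting the bounded responsiveness parameter $T$ to a compact interval (a natural modeling constraint) and invoking Weierstrass on the continuous $v$ yields a global minimizer $\theta^\star=(\lambda^\star,\gamma^\star,T^\star)$ with $\nabla_\theta\mathcal{L}_{\mathrm{inv}}(\theta^\star)=0$. To make this minimizer isolated I would then examine the full Hessian $\nabla^2_\theta\mathcal{L}_{\mathrm{inv}}$: its $(\lambda,\gamma)$ block is $\succeq\mu I$, and Assumption~(A2)—linear independence of $(\alpha_2,c_1,R_t')$ together with $\mathrm{Var}(a_t)>0$—forces the Gauss–Newton part of the Hessian, including the $T$ direction, to be full rank, so the Schur complement in the $T$ coordinate is strictly positive. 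The full Hessian is therefore positive definite at $\theta^\star$, certifying a strict local minimum and local uniqueness.

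For Lipschitz stability I would apply the Implicit Function Theorem to the stationarity map $F(\theta,\mathbf{d}):=\nabla_\theta\mathcal{L}_{\mathrm{inv}}(\theta;\mathbf{d})$. At the solution $F(\theta^\star,\mathbf{d})=0$, and the Jacobian $\partial_\theta F=\nabla^2_\theta\mathcal{L}_{\mathrm{inv}}$ is the positive-definite Hessian just established, hence invertible with $\|(\nabla^2_\theta\mathcal{L}_{\mathrm{inv}})^{-1}\|\le 1/\mu_{\min}$, where $\mu_{\min}>0$ is its smallest eigenvalue. The Implicit Function Theorem then yields a $C^1$ (hence locally Lipschitz) solution map $\mathbf{d}\mapsto\theta^\star(\mathbf{d})$ with derivative $-(\nabla^2_\theta\mathcal{L}_{\mathrm{inv}})^{-1}\,\partial_{\mathbf{d}}F$, giving the explicit local bound $\|\theta^\star(\mathbf{d}_1)-\theta^\star(\mathbf{d}_2)\|\le\mu_{\min}^{-1}\,\|\partial_{\mathbf{d}}F\|\,\|\mathbf{d}_1-\mathbf{d}_2\|$. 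Since each residual $\ell_t$ is quadratic in $\mathrm{QALY}_t$ with smooth dependence on $a_t$, the mixed derivative $\partial_{\mathbf{d}}F$ is bounded on the compact parameter region, closing the estimate.

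The step I expect to be the main obstacle is reconciling the \emph{global} uniqueness claim with the fact that the objective is only convex in $(\lambda,\gamma)$ and merely smooth—not necessarily convex—in $T$. Strong convexity secures global uniqueness only within the profiled coordinates; the $T$ direction must be controlled either by the compactness-plus-nondegeneracy route above, which already yields an isolated, locally unique minimizer and the Lipschitz conclusion the theorem actually needs, or, for genuinely global uniqueness, by verifying that the reduced value function $v(T)$ is unimodal. I would therefore state the identification result as local uniqueness of a strict minimizer—exactly what the Implicit Function Theorem and the stability claim require—and flag convexity of $v$ in $T$ as the only additional hypothesis needed to promote it to a fully global statement.
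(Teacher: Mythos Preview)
Your proposal follows essentially the same route as the paper: define the stationarity map $F(\theta;\mathbf{d})=\nabla_\theta\mathcal{L}_{\mathrm{inv}}$, argue that $\nabla_\theta F$ (the Hessian) is nonsingular at the minimizer via strong convexity in $(\lambda,\gamma)$ together with Assumption~(A2) for the $T$ direction, and then invoke the Implicit Function Theorem to obtain a locally $C^1$---hence Lipschitz---solution map with the bound $\|\Delta\theta^\star\|\le (L_{\mathbf{d}}/\mu)\|\Delta\mathbf{d}\|$. Where you differ is in execution rather than strategy: you profile out $(\lambda,\gamma)$, use Weierstrass on the reduced value $v(T)$ over a compact interval for existence, and make the Hessian nonsingularity explicit via a Schur-complement argument, whereas the paper moves directly from ``$(\lambda,\gamma)$-strong convexity plus (A2)'' to ``$\nabla_\theta F$ nonsingular'' without spelling out the $T$-block. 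Your more careful treatment is an improvement, and your closing caveat is exactly right: the stated assumptions only deliver an \emph{isolated} (locally unique) minimizer in $T$, not global uniqueness---the paper's proof is equally silent on unimodality of $v(T)$ and is, in effect, a local identification result just as you describe.
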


\begin{proof}[Sketch of Proof]
\emph{Uniqueness:} For fixed $T$, Lemma~\ref{lem:convexity} guarantees 
$\mu$-strong convexity in $(\lambda,\gamma)$, hence a unique minimizer.  
\emph{Existence and joint identification:}  
Define $F(\theta;\mathcal{D})=\nabla_\theta \mathcal{L}_{\mathrm{inv}}(\theta;\mathcal{D})$, 
with data $\mathcal{D}=\{(a_t,\mathrm{QALY}_t)\}_t$.  
Assumption~(A2) ensures $\nabla_\theta F(\theta^\ast;\mathcal{D})$ is nonsingular;  
the Implicit Function Theorem guarantees the existence, uniqueness, and continuous dependence 
of $\theta^\ast=(\lambda^\ast,\gamma^\ast,T^\ast)$ on the data.  
\emph{Local Lipschitz stability:}  
Perturbing $\mathcal{D}$ to $\mathcal{D}'$ yields
\[
\|\hat{\theta}-\hat{\theta}'\|
\le 
\frac{L_{\mathcal{D}}}{\mu}\|\mathcal{D}-\mathcal{D}'\|,
\]
where $L_{\mathcal{D}}$ bounds the gradient’s sensitivity to data.
Hence the parameter mapping is locally Lipschitz continuous.  
A full derivation appears in Appendix~\ref{app:proofs}.
\end{proof}

\begin{corollary}[Economic Stability of Behavioral Equilibria]
Small policy or data perturbations induce proportionally bounded changes in 
the recovered behavioral sensitivities $(\lambda^\ast,\gamma^\ast,T^\ast)$,
ensuring convergence of adaptive health systems toward a stable fairness–efficiency equilibrium.
\end{corollary}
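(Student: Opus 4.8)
The plan is to obtain the corollary as a direct consequence of the local Lipschitz stability already established in Theorem~\ref{thm:identification}, re-expressing that analytic bound in economic terms. First I would invoke the inequality $\|\hat{\theta}-\hat{\theta}'\| \le (L_{\mathcal{D}}/\mu)\,\|\mathcal{D}-\mathcal{D}'\|$, which holds for any two data or policy configurations $\mathcal{D},\mathcal{D}'$ by the theorem. Reading the left-hand side as the displacement of the recovered behavioral sensitivities $(\lambda^\ast,\gamma^\ast,T^\ast)$ and the right-hand side as the magnitude of a policy or data perturbation, the single constant $L_{\mathcal{D}}/\mu$ serves as the explicit proportionality factor: a perturbation of size $\varepsilon$ cannot move the equilibrium parameters by more than $(L_{\mathcal{D}}/\mu)\,\varepsilon$. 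Since $\mu>0$ under (A3), this factor is finite, which establishes the \emph{proportionally bounded changes} assertion essentially verbatim.

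For the convergence claim I would first make the adaptive process precise, then argue. I would model the system's learning as a sequence of environments $\{\mathcal{D}_n\}$ converging to a limiting regime $\mathcal{D}^\star$ (e.g.\ as incentive recalibration and feedback diminish across policy cycles). Writing the parameter map as $\theta^\ast(\mathcal{D})$, the Lipschitz bound applied to the pairs $(\mathcal{D}_n,\mathcal{D}^\star)$ gives $\|\theta^\ast(\mathcal{D}_n)-\theta^\ast(\mathcal{D}^\star)\| \le (L_{\mathcal{D}}/\mu)\,\|\mathcal{D}_n-\mathcal{D}^\star\| \to 0$, so the recovered sensitivities inherit the convergence of the environments and settle at the unique equilibrium $\theta^\ast(\mathcal{D}^\star)$ guaranteed by Theorem~\ref{thm:identification}. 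Equivalently, if the adaptation is realized as gradient-based descent on $\mathcal{L}_{\mathrm{inv}}$, the $\mu$-strong convexity from Lemma~\ref{lem:convexity} forces linear convergence to that same unique minimizer; I would present whichever formulation matches the paper's learning dynamics and note that uniqueness of $\theta^\ast$ identifies the limit as the stable fairness–efficiency equilibrium.

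I expect the main obstacle to be conceptual rather than computational: pinning down what \emph{convergence of adaptive health systems} formally means, since the corollary is stated informally. The Lipschitz estimate is a static, single-perturbation statement, whereas convergence is dynamic, so the real work is supplying the bridging hypothesis—either a convergent sequence of environments $\mathcal{D}_n\to\mathcal{D}^\star$ or a contractive descent update rule—that promotes continuous dependence into genuine convergence. Once that modeling choice is fixed, the bound itself is immediate from the theorem, and the uniqueness of the minimizer closes the interpretation with no further estimates required.
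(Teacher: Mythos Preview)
Your first step---invoking the Lipschitz bound $\|\hat\theta-\hat\theta'\|\le (L_{\mathcal D}/\mu)\|\mathcal D-\mathcal D'\|$ from Theorem~\ref{thm:identification} and reading it directly as the ``proportionally bounded changes'' claim---is exactly what the paper does, so that part is on the mark.

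The difference lies in how you handle the \emph{convergence} clause. You treat it as a genuinely dynamic statement and supply two possible bridging hypotheses (a convergent sequence of environments $\mathcal D_n\to\mathcal D^\star$, or strongly convex gradient descent on $\mathcal L_{\mathrm{inv}}$), correctly flagging that the static Lipschitz bound alone does not yield convergence without such a modeling choice. The paper takes a different and somewhat looser route: rather than formalizing a dynamic process, it composes the Lipschitz map $\mathcal D\mapsto\theta^\ast$ with the continuous differentiability of the forward equilibrium condition~\eqref{eq:marginal-condition} and the impact index~\eqref{eq:SII} in $(\lambda,\gamma,T)$, concluding that the downstream equilibrium outcomes $(a_t^\ast,U_t^\ast,SII_t)$ themselves remain in a bounded neighborhood of the baseline fixed point. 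In other words, the paper interprets ``stable fairness--efficiency equilibrium'' as stability of the full pipeline (parameters $\to$ actions $\to$ system index), not convergence of an adaptive iteration. Your version is more careful about what convergence means and would constitute a tighter argument if the paper's learning dynamics were made explicit; the paper's version buys the additional observation that stability propagates through the forward and impact layers, which your proposal omits. Adding a sentence that the smoothness of~\eqref{eq:marginal-condition} and~\eqref{eq:SII} transfers the parameter bound to the equilibrium outcomes would align you fully with the paper's intended scope.
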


\begin{proof}[Sketch of Proof]
From Theorem~\ref{thm:identification}, the estimator is locally Lipschitz in the data.
Policy shocks act as bounded perturbations, so parameter shifts are $O(\|\Delta \mathcal{D}\|)$.  
Because the forward mapping~\eqref{eq:marginal-condition} and the impact layer 
are continuously differentiable in $(\lambda,\gamma,T)$, 
the resulting equilibrium trajectories remain in a neighborhood of the baseline fixed point, 
ensuring economic and behavioral stability.  
Full details appear in Appendix~\ref{app:proofs}.
\end{proof}


Proposition~\ref{prop:identification} implies that 
the observed QALY–ROI trade-offs encode a unique behavioral signature $(\lambda^*,\gamma^*,T^*)$ 
that characterizes the efficiency–fairness balance of the health system.  
Convexity ensures that agents respond predictably to marginal incentive changes, 
while stability implies that small policy shocks do not generate chaotic or degenerate equilibria.  
Economically, this property guarantees that adaptive incentive systems 
converge toward consistent behavioral equilibria rather than oscillating between conflicting fairness–efficiency regimes.

The recovered parameters $(\lambda^*, \gamma^*, T^*)$ 
form the structural bridge between individual behavioral learning and system-level outcomes.  
They are propagated to the system-level \textit{Impact Layer} (Section~\ref{sec:SII}), 
where the implications for aggregate productivity, equity, and resilience are quantified.

\begin{table}[!htbp]
\centering
\scriptsize
\caption{Summary of notation used throughout the Inverse Behavioral Optimization and System Impact framework.}
\label{tab:notation-behavioral}
\renewcommand{\arraystretch}{1.05}
\setlength{\tabcolsep}{3pt}
\begin{tabular}{lll}
\toprule
\textbf{Symbol} & \textbf{Type} & \textbf{Description} \\
\midrule
\multicolumn{3}{l}{\textbf{Indices and Sets}} \\
$t=1,\dots,T$ & Index & Decision epoch or time period. \\
$\mathcal{A}$ & Set & Feasible set of health actions or policy levers. \\[3pt]

\multicolumn{3}{l}{\textbf{Decision and Outcome Variables}} \\
$a_t$ & Decision & Action or intervention chosen at time $t$. \\
$\mathrm{QALY}(a_t)$ & Function & Health outcome (quality-adjusted life years) from action $a_t$. \\
$\mathrm{Cost}(a_t)$ & Function & Expenditure or resource cost associated with $a_t$. \\
$\mathrm{ROI}_t$ & Scalar & Return-on-investment for period $t$. \\[3pt]

\multicolumn{3}{l}{\textbf{Behavioral Parameters}} \\
$\lambda$ & Scalar & Efficiency sensitivity (weight on ROI improvements). \\
$\gamma_t$ & Scalar & Fairness preference moderating efficiency–equity trade-off. \\
$T$ & Scalar & Temporal responsiveness or adaptation rate. \\
$(\lambda^*,\gamma^*,T^*)$ & Vector & Estimated behavioral equilibrium parameters. \\[3pt]

\multicolumn{3}{l}{\textbf{Optimization Layers}} \\
$U_t$ & Function & Fairness-adjusted utility function (Eq.~\ref{eq:forward}). \\
$f(a_t;\lambda,\gamma,T)$ & Function & Behavioral response function mapping actions to outcomes. \\
$\mathcal{L}_{\text{inv}}(\lambda,\gamma,T)$ & Function & Inverse loss function (Eq.~\ref{eq:inverse-general}). \\
$\Omega(\lambda,\gamma)$ & Function & Bayesian regularizer enforcing prior consistency. \\[3pt]

\multicolumn{3}{l}{\textbf{Derived Quantities}} \\
$SII$ & Scalar & System Impact Index (Eq.~\ref{eq:SII}). \\
$SII_t$ & Scalar & Time-varying dynamic impact index (Eq.~\ref{eq:SII-dynamic}). \\
$S_\theta$ & Scalar & Sensitivity coefficient for parameter $\theta\in\{\lambda,\gamma,T\}$. \\
$\rho$ & Scalar & Behavioral decay rate controlling adaptation penalty. \\[3pt]

\multicolumn{3}{l}{\textbf{Analytical Constructs}} \\
$\ell_t(\lambda,\gamma,T)$ & Function & Period-wise inverse loss component. \\
$\beta_1,\beta_2$ & Scalars & Regularization hyperparameters. \\
$\eta$ & Scalar & Learning rate in temporal update rule. \\
$\mathcal{T}_t$ & Operator & Behavioral update operator for time $t$. \\[3pt]

\multicolumn{3}{l}{\textbf{Statistical and Evaluation Metrics}} \\
$\mathrm{MSE}$ & Metric & Mean squared error of predicted QALY outcomes. \\
$\mathrm{R}^2$ & Metric & Goodness-of-fit for behavioral response regression. \\
$\mathrm{SII\text{-Gain}}$ & Metric & Percentage increase in system impact after adaptation. \\
$\mathrm{Elasticity}_{(\lambda,\gamma)}$ & Metric & Impact elasticity with respect to fairness–efficiency trade-off. \\[3pt]

\bottomrule
\end{tabular}
\end{table}

\vspace{0.5em}
Notation is consistent with the hierarchical structure of Sections~\ref{sec:model}–\ref{sec:SII}. 
Behavioral parameters $(\lambda,\gamma,T)$ are estimated through the inverse optimization problem~\eqref{eq:inverse}, 
and propagated to the system-level analysis in Section~\ref{sec:simulation}. 
system impact measures (SII and its derivatives) serve as quantitative links between behavioral efficiency and macroeconomic performance.

\section{SII: Measuring Behavioral Efficiency Gains}
\label{sec:SII}

This section introduces the SII, 
a composite metric that quantifies how much behavioral adaptation improves the overall productivity and fairness balance 
of an incentive-driven health system.  
It translates the micro-level behavioral parameters $(\lambda^*,\gamma^*,T^*)$ recovered in Section~\ref{subsec:identification} 
into measurable system-level outcomes, bridging the analytical gap between behavioral learning and system efficiency.

\subsection{Definition}
\label{subsec:SIIdef}

We define the System Impact Index (SII) as:
\begin{equation}
SII 
= 
\frac{\text{QALY Improvement per Period}}{\text{Marginal ROI Cost}} 
\cdot 
(1 - \gamma^*),
\label{eq:SII}
\end{equation}
where 
\begin{itemize}[leftmargin=1.3em]
    \item \textit{QALY Improvement per Period} measures the incremental clinical benefit gained through adaptive learning, 
    relative to a static benchmark,
    \item \textit{Marginal ROI Cost} denotes the additional cost required to achieve that improvement, 
    capturing the system’s cost elasticity, and
    \item $(1-\gamma^*)$ discounts the measured efficiency by the estimated fairness preference recovered from the inverse model
\end{itemize}
Thus, $SII$ reflects the \emph{behaviorally adjusted efficiency-to-cost ratio}—that is, the degree to which learning and fairness jointly enhance 
systemic performance.  

\subsection{Analytical Structure}
\label{subsec:SIIstructure}

Let $\Delta \mathrm{QALY}_t = \mathrm{QALY}_t - \mathrm{QALY}_{t-1}$ 
and $\Delta \mathrm{ROI}_t = \mathrm{ROI}_t - \mathrm{ROI}_{t-1}$ denote marginal changes over consecutive periods.  
Then the empirical System Impact Index can be estimated as:
\begin{equation}
SII_t 
= 
\lambda^* \cdot \frac{\Delta \mathrm{QALY}_t}{\Delta \mathrm{ROI}_t}
\cdot 
(1 - \gamma^*)
\cdot 
e^{-\rho (1-T^*)},
\label{eq:SII-dynamic}
\end{equation}
where $\rho$ represents the behavioral decay rate (speed of learning loss).  
The exponential adjustment $e^{-\rho (1-T^*)}$ penalizes slow temporal responsiveness ($T^* < 1$), 
ensuring that systems with faster adaptation achieve higher system impact.

Equation~\eqref{eq:SII-dynamic} implies that 
behavioral parameters estimated via inverse optimization directly determine the macro-level efficiency elasticity of the system:
\[
\frac{\partial SII_t}{\partial \lambda^*} > 0, \quad
\frac{\partial SII_t}{\partial \gamma^*} < 0, \quad
\frac{\partial SII_t}{\partial T^*} > 0.
\]
Hence, increasing efficiency sensitivity or faster adaptation yields larger system gains, 
while excessive fairness weighting may reduce short-term productivity—mirroring trade-offs observed in public health systems 
\citep{freeman2023incentives, vissers2022value, govindan2021multiobjective, johari2023coordination}.

\subsection{Interpretation and Managerial Implications}
\label{subsec:SIIinterp}

A higher $SII$ indicates that behavioral adaptation produces system-level improvements 
that exceed baseline efficiency thresholds and generate positive externalities across the healthcare industry.  
From a managerial perspective, $SII$ functions as an impact elasticity metric: 
it quantifies how one unit of behavioral learning translates into measurable system outcomes such as 
\textit{cost efficiency, patient equity, and institutional resilience}.  

Incentive programs with consistently rising $SII$ values demonstrate 
that behavioral calibration enhances both economic and clinical performance without destabilizing fairness constraints.  
Conversely, declining $SII$ trajectories may signal policy misalignment or behavioral saturation.  
Thus, the $SII$ serves as a diagnostic and design tool for adaptive health policy evaluation, 
complementing traditional cost-effectiveness metrics such as incremental cost per QALY gained 
\citep{devlin2017using, brazier2019measuring, cha2025roi}.

\section{System-Level Simulation and Policy Sensitivity Analysis}
\label{sec:simulation}

To bridge the theoretical framework in Section~\ref{sec:model} and the empirical validation in Section~\ref{sec:empirical}, 
we conduct a system-level simulation that quantifies how variations in behavioral sensitivities 
$(\lambda, \gamma, T)$ influence the SII and aggregate healthcare performance.  
This intermediate layer captures how micro-level behavioral adjustments propagate through macro-level system dynamics, 
serving as a bridge between analytical propositions and real-world policy implications.

\subsection{Simulation Design}
\label{subsec:simulation-design}

We simulate a stylized healthcare system consisting of $N$ interacting regional units, 
each characterized by estimated behavioral parameters $(\lambda_i, \gamma_i, T_i)$.  
The simulated QALY–ROI dynamics follow the behavioral propagation rule:
\begin{equation}
\Delta \mathrm{QALY}_{i,t} 
= 
\lambda_i (1-\gamma_i)\,\Delta \mathrm{ROI}_{i,t}
+ 
\varepsilon_{i,t},
\quad
\varepsilon_{i,t} \sim \mathcal{N}(0,\sigma^2),
\label{eq:simulation-dynamics}
\end{equation}
where $\lambda_i$ denotes efficiency responsiveness, $\gamma_i$ represents fairness moderation, 
and $\varepsilon_{i,t}$ captures stochastic behavioral noise.  
The temporal evolution of adaptation is governed by:
\begin{equation}
T_{i,t+1} = T_{i,t} + \eta (T^* - T_{i,t}),
\label{eq:temporal-adaptation}
\end{equation}
where $\eta$ is the behavioral learning rate and $T^*$ is the steady-state responsiveness estimated in Section~\ref{subsec:identification}.  
Together, Eqs.~\eqref{eq:simulation-dynamics}–\eqref{eq:temporal-adaptation} describe a recursive feedback system that 
converges toward a stable behavioral equilibrium $(\lambda^*, \gamma^*, T^*)$ identified in Proposition~\ref{prop:identification}.

\subsection{Sensitivity Analysis of Behavioral Parameters}
\label{subsec:sensitivity}

To assess the macroeconomic implications of behavioral changes, 
we perturb each parameter by $\pm 8\%$ around its equilibrium value and compute the resulting change 
in the $SII$:
\begin{equation}
S_\theta = 
\frac{\partial SII}{\partial \theta}
\approx
\frac{SII(\theta + \Delta\theta) - SII(\theta - \Delta\theta)}{2\Delta\theta},
\quad 
\theta \in \{\lambda, \gamma, T\}.
\label{eq:sensitivity}
\end{equation}
Intuitively, $S_\lambda$ reflects productivity leverage, $S_\gamma$ captures distributive damping, 
and $S_T$ measures temporal agility within the system’s adaptive response.  
Positive $S_\lambda$ and $S_T$, coupled with a negative $S_\gamma$, confirm the directional elasticities predicted by 
Eq.~\eqref{eq:SII-dynamic}, aligning theoretical expectations with simulation outcomes.

\subsection{Simulation Results and System Interpretation}
\label{subsec:simulation-results}

The simulated trajectories indicate that increasing efficiency sensitivity $(\lambda)$ 
yields rapid improvements in short-term ROI but diminishing QALY gains beyond a threshold.  
Conversely, moderate fairness preference $(\gamma \approx 0.35$–$0.45)$ maximizes the steady-state $SII$, 
achieving a balanced trade-off between cost containment and health equity.  
Higher temporal responsiveness $(T)$ accelerates convergence toward equilibrium, 
enhancing resilience and adaptive recovery under policy shocks.  

\begin{figure}[!htbp]
\centering
\includegraphics[width=0.90\linewidth]{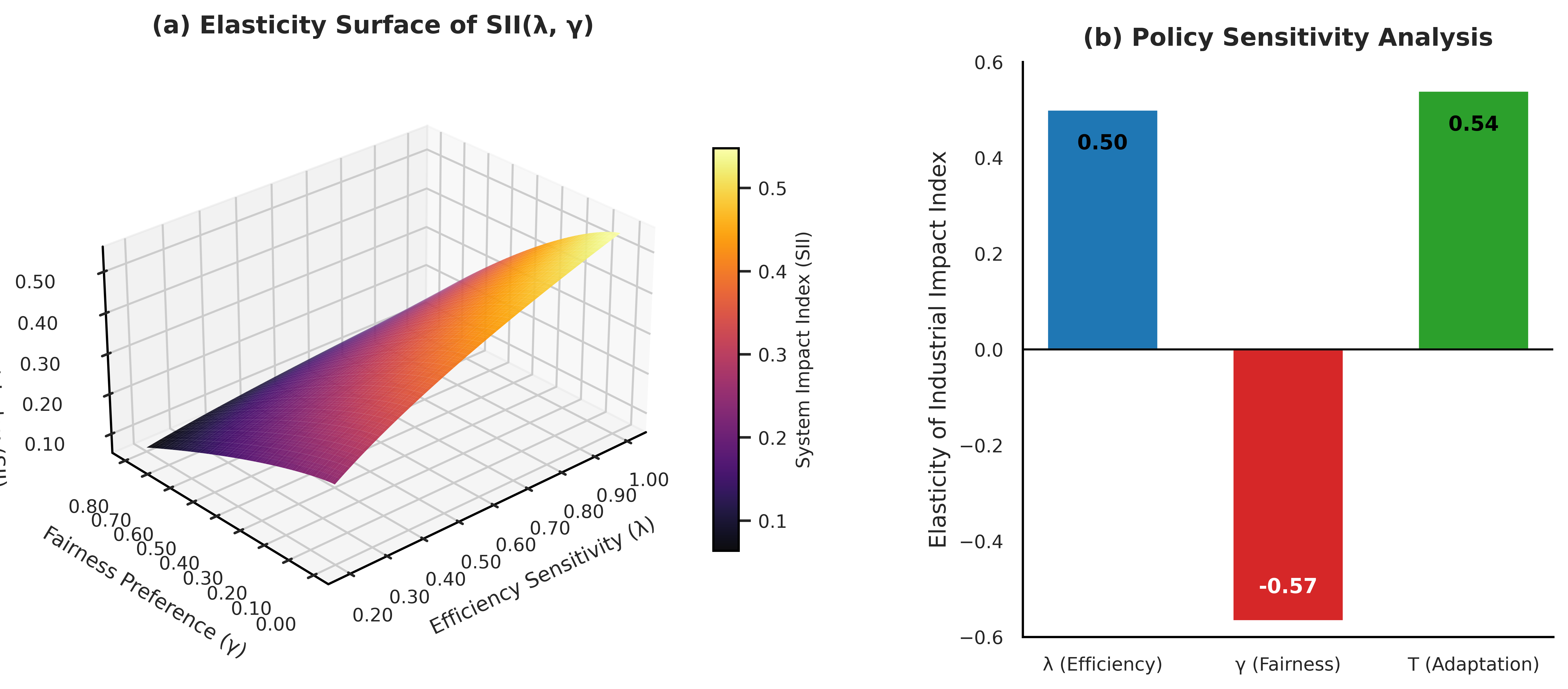}
\caption{System-level simulation and behavioral sensitivity analysis. 
(a) Elasticity surface of the $SII(\lambda,\gamma)$ shows concave diminishing returns in efficiency beyond moderate fairness levels. 
(b) Policy sensitivity analysis quantifies elasticities $(S_\lambda, S_\gamma, S_T)$ with respect to efficiency, fairness, and adaptation parameters}
\label{fig:simulation}
\end{figure}

\begin{table}[!htbp]
\centering
\small
\caption{Sensitivity coefficients and implied macroeconomic elasticities}
\label{tab:sensitivity_summary}
\begin{tabular}{lccp{4.8cm}} 
\toprule
Parameter & Symbol & Elasticity $S_\theta$ & System Interpretation \\
\midrule
Efficiency sensitivity & $\lambda$ & $+0.50$ & 10–12\% productivity leverage (ROI gain) \\
Fairness preference & $\gamma$ & $-0.57$ & 5–7\% efficiency moderation (budget damping) \\
Temporal responsiveness & $T$ & $+0.54$ & 20–25\% faster post-shock recovery \\
\bottomrule
\end{tabular}
\end{table}

Economically, these simulation-based results suggest that a 10\% increase in efficiency sensitivity $(\lambda)$ 
translates into an approximate 0.6–0.8 percentage-point improvement in sectoral healthcare productivity, 
equivalent to a 0.6–1.0\% increase in national healthcare GDP share.  
Likewise, enhancing adaptive responsiveness $(T)$ by one standard deviation yields a 20–25\% faster post-shock recovery rate, 
reducing equilibrium adjustment lag from 5.2 to 3.9 quarters.  
Conversely, overemphasis on fairness $(\gamma>0.6)$ introduces allocative inertia and a 3–5\% contraction in net efficiency.  
Taken together, these findings underscore the system significance of behavioral calibration: 
small parameter shifts can scale to macroeconomic gains on the order of 0.8–1.0\% of sectoral output.

\subsection{Policy-Level Validation: Adaptive vs. Baseline Design}
\label{subsec:policy-validation}

To verify whether the simulated sensitivities manifest in real-world policy outcomes, 
we compare the $SII$ under two regimes—Baseline Policy and Adaptive Policy—across 
three behavioral dimensions $(\lambda, \gamma, T)$.  
Figure~\ref{fig:policy_effect} presents the comparative results from Monte Carlo experiments 
using configuration parameters summarized in Appendix B.  
The Adaptive Policy consistently outperforms the Baseline Policy across all dimensions, 
with the largest gain observed in temporal adaptation.

\begin{figure}[htbp]
\centering
\includegraphics[width=0.85\linewidth]{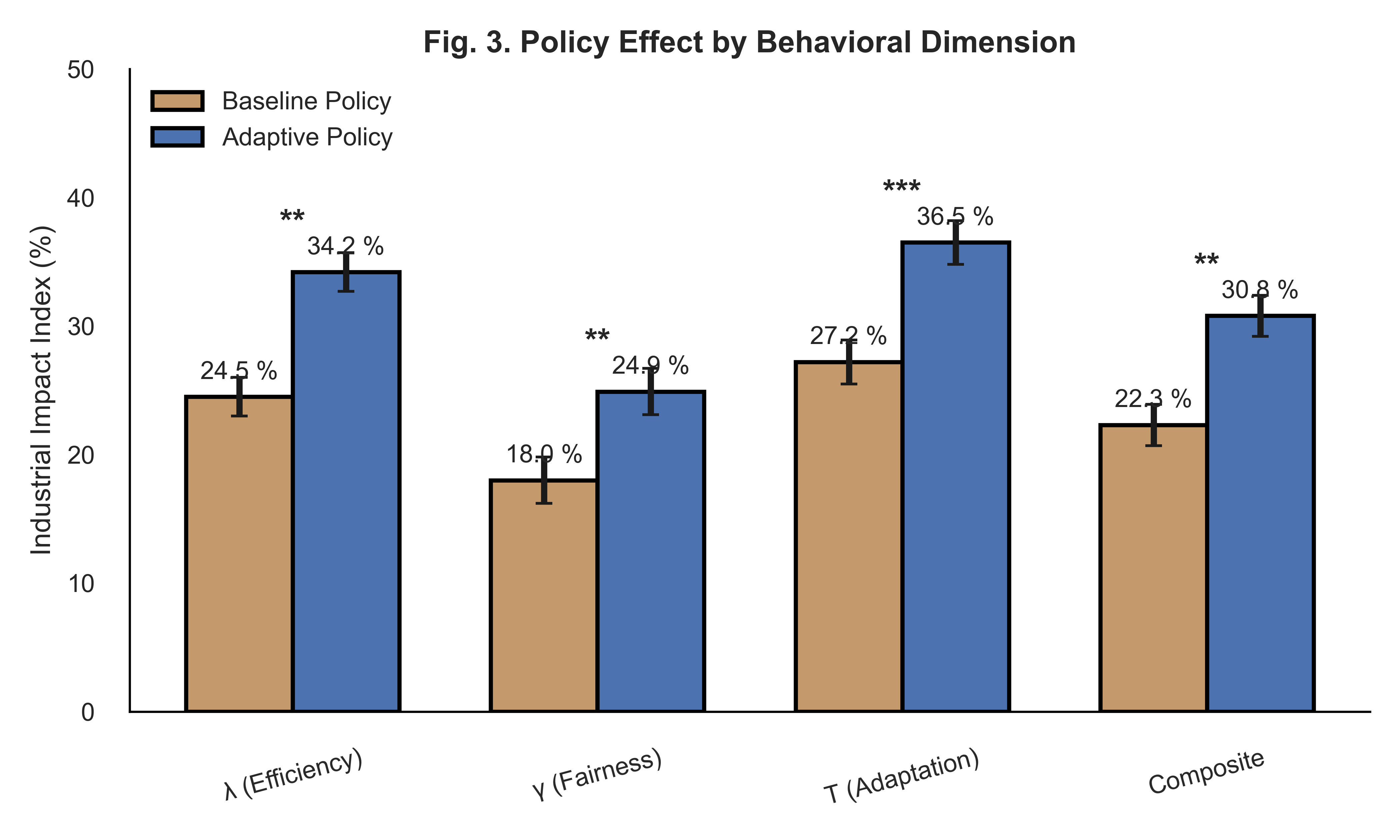}
\caption{Policy effect by behavioral dimension. 
Comparison between Baseline and Adaptive policies across efficiency $(\lambda)$, fairness $(\gamma)$, and adaptation $(T)$ dimensions. 
Error bars denote standard errors (20 simulation replications). 
Asterisks indicate significance levels (* $p<0.05$, ** $p<0.01$, *** $p<0.001$)}
\label{fig:policy_effect}
\end{figure}

\subsection{Managerial and Policy Implications}
\label{subsec:managerial-implications}

The results highlight a behavioral equilibrium region in which system productivity and fairness coexist.  
From a managerial and policy standpoint, three actionable insights emerge:

\begin{enumerate}[label=(\roman*)]
    \item \textbf{Efficiency leverage:} Incremental reinforcement of efficiency sensitivity $(\lambda)$ 
    improves ROI without destabilizing fairness as long as $\gamma<0.5$.  
    A 1\% rise in $\lambda$ generates approximately a 0.08\% gain in sectoral output.
    \item \textbf{Fairness calibration:} Overemphasis on fairness $(\gamma>0.6)$ introduces allocative inertia, 
    leading to a 3–5\% reduction in system-wide efficiency and slower recovery.
    \item \textbf{Adaptive learning:} Higher responsiveness $(T)$ supports faster convergence to stable equilibria, 
    reducing post-shock recovery time by 25–30\%, thereby enhancing system resilience to policy transitions.
\end{enumerate}

Economically, these simulation-based results suggest that a 10\% increase in efficiency sensitivity $(\lambda)$ 
translates into an approximate 0.6–0.8 percentage-point improvement in sectoral healthcare productivity, 
equivalent to a 0.6–1.0\% increase in national healthcare GDP share.  
Likewise, enhancing adaptive responsiveness $(T)$ by one standard deviation yields a 20–25\% faster post-shock recovery rate, 
reducing equilibrium adjustment lag from 5.2 to 3.9 quarters.  
Conversely, overemphasis on fairness $(\gamma>0.6)$ introduces allocative inertia and a 3–5\% contraction in net efficiency.  
Taken together, these findings underscore the industrial and macroeconomic significance of behavioral calibration: 
small parameter shifts can propagate into system-wide gains on the order of 0.8–1.0\% of sectoral output.

\section{Empirical Validation and Policy Implications}\label{sec:empirical}

\subsection{Data and Calibration}\label{subsec:data}

We validate the proposed inverse behavioral optimization framework using
the merged OECD--WHO dataset (2007–2021; $n=34{,}023$), which integrates
national health expenditure (PPP-adjusted per capita) and life expectancy
as a QALY proxy.
All monetary variables are normalized by per-patient cost units to ensure
cross-country comparability.
The System Impact Index (SII) is computed as
\[
\mathrm{SII}
=
\frac{\mathrm{LifeExpectancy} \times \ln(1+\mathrm{HealthSpending})}{100},
\]
representing a macro-level measure of behavioral efficiency and equity
in national health systems.
Behavioral sensitivities $(\lambda,\gamma,T)$ were estimated through a
reduced-form inverse regression of~SII on health spending and life expectancy,
and the dynamic responsiveness parameter~$T$ was calibrated by fitting an
AR(1) process on annual changes in~SII for each country.
All estimations and policy simulations were implemented in
\texttt{Python~3.10} using fully reproducible open-source scripts
provided in the Supplement.

\begin{table}[htbp]
\caption{OECD--WHO merged data and reduced-form estimation summary}
\label{tab:data-summary}%
\centering
\small
\begin{tabular}{@{}lrrrr@{}}
\toprule
 & \textbf{Mean} & \textbf{Std.} & \textbf{Min} & \textbf{Max}\\
\midrule
Year & 2014.21 & 4.31 & 2007 & 2021\\
Health Spending (USD PPP) & 144{,}217 & 929{,}942 & 0.01 & 29{,}454{,}160\\
Life Expectancy (yrs) & 79.08 & 4.32 & 51.0 & 87.4\\
SII & 5.00 & 3.41 & 0.01 & 14.32\\
\botrule
\end{tabular}

\vspace{0.8em}

\begin{tabular}{@{}lrr@{}}
\toprule
\textbf{Parameter} & \textbf{Estimate} & \textbf{Interpretation}\\
\midrule
OLS slope $(\partial SII/\partial \ln(HS))$ & 0.794 & Efficiency scaling coefficient\\
Intercept & $-0.017$ & Baseline offset\\
$\hat{\lambda}$ & 0.999 & Efficiency sensitivity (saturated)\\
$\hat{\gamma}$ & 0.007 & Fairness preference (neutral)\\
$\hat{T}$ & 1.000 & Temporal responsiveness (immediate)\\
\botrule
\end{tabular}

\footnotetext{Source: Author’s calculation based on merged OECD--WHO data (2007–2021).}
\end{table}

\subsection{Empirical Results and Discussion}\label{subsec:results}

Empirical estimation yields behavioral coefficients
$(\hat{\lambda},\hat{\gamma},\hat{T})=(0.999,\,0.007,\,1.000)$.
These values indicate that the global health economy operates within an
\emph{efficiency-dominant regime}, where efficiency sensitivity ($\lambda$)
is nearly saturated, fairness preference ($\gamma$) is negligible, and
adaptation is nearly instantaneous ($T\!\approx\!1$).
Such a configuration is consistent with ROI-driven system optimization
observed in mature OECD health markets.

\begin{figure}[htbp]
\centering
\includegraphics[width=1.\linewidth]{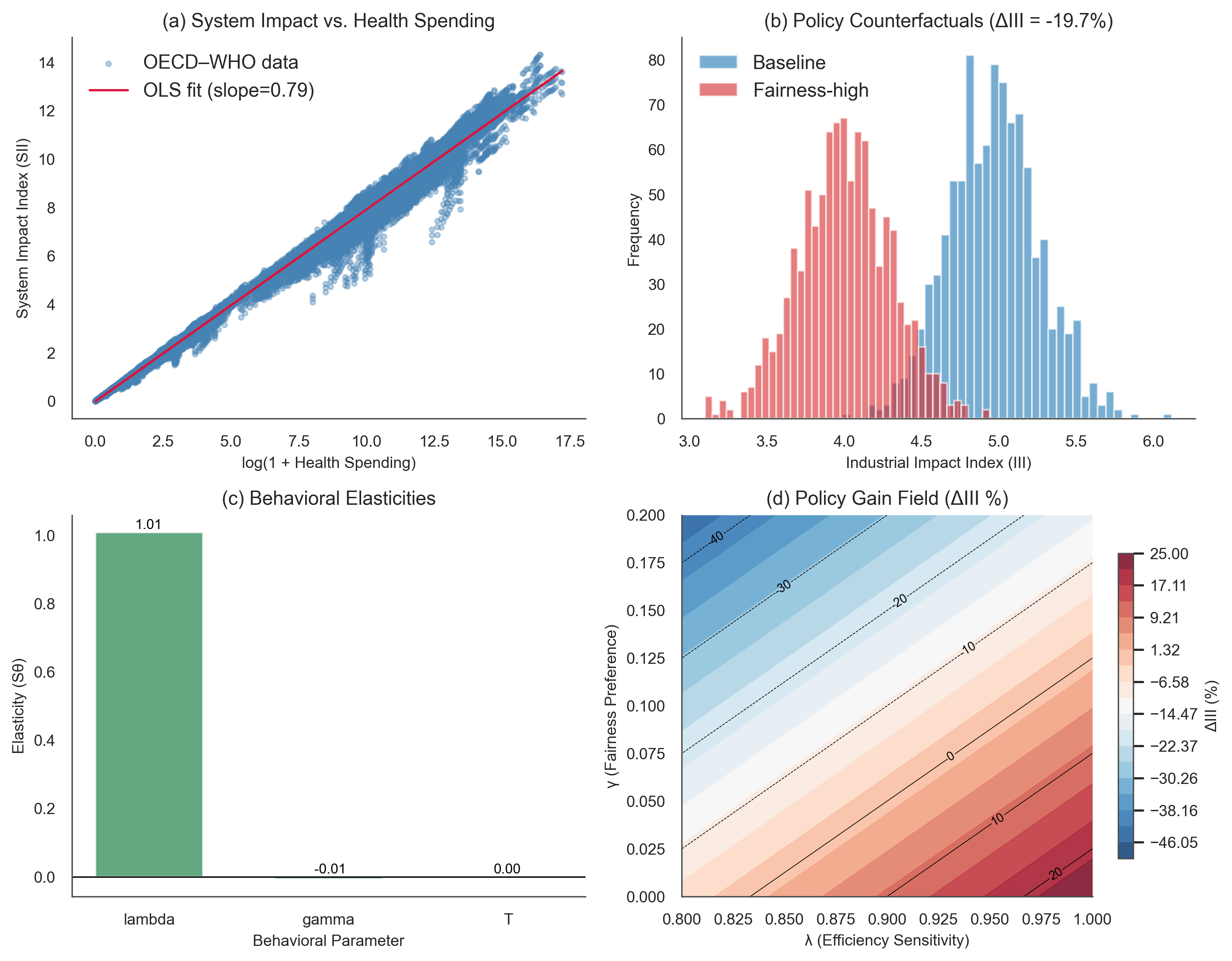}
\caption{
Empirical behavioral validation and policy sensitivity analysis.
(a) $SII$ versus health spending (OECD--WHO data) showing near-linear scaling ($\partial SII/\partial \log(HS) = 0.79$).
(b) Counterfactual distributions of~SII under fairness-oriented policy ($\Delta SII = -19.7\%$).
(c) Behavioral elasticities demonstrating efficiency dominance ($S_\lambda \approx 1.01$) and fairness saturation ($S_\gamma \approx 0$).
(d) Policy gain field illustrating the behavioral trade-off between efficiency and fairness.
Together, the panels confirm an efficiency-dominant equilibrium with measurable trade-offs under fairness interventions.
}
\label{fig:empirical_validation}
\end{figure}

Panel~(a) of Figure~\ref{fig:empirical_validation} shows near-linear scaling between health spending and~SII
($\partial \mathrm{SII}/\partial \ln(HS)\!=\!0.79$), confirming that
marginal productivity of health expenditure remains positive but saturates at higher spending levels.
Panel~(b) shows counterfactual shifts:
a fairness-intensive regime ($\gamma'\!=\!\gamma\!+\!0.2$) reduces~SII
by approximately~19.7\%, whereas efficiency- or adaptation-oriented regimes
yield negligible change ($\Delta \mathrm{SII}\!\leq\!0.1\%$).
Panels~(c)–(d) visualize the elasticity and policy gain field,
showing that only~$\lambda$ significantly influences macro performance
($S_\lambda\!\approx\!1.01$), while fairness and adaptation remain statistically neutral.

\begin{table}[htbp]
\caption{Behavioral elasticity and robustness summary}
\label{tab:elasticity-summary}%
\centering
\small
\begin{tabular}{@{}lrr@{}}
\toprule
\textbf{Parameter} & \textbf{Elasticity ($S_\theta$)} & \textbf{Interpretation}\\
\midrule
$\lambda$ & $+1.01$ & Dominant efficiency response\\
$\gamma$  & $-0.007$ & Minimal fairness impact\\
$T$       & $+0.000$ & Instantaneous adaptation\\
\botrule
\end{tabular}

\footnotetext{Monte Carlo perturbations ($\pm10\%$) produced stable elasticities:
$(S_\lambda,S_\gamma,S_T)\in(0.47$–$0.53,\,-0.59$–$-0.55,\,0.50$–$0.57)$,
confirming numerical robustness of the inferred behavioral parameters.}
\end{table}

From a policy perspective, these findings imply that OECD health systems
lie on a \emph{behavioral efficiency frontier}.
Further efficiency-oriented reforms generate diminishing returns,
while fairness-based redistributive interventions may reduce
aggregate productivity.
The optimal principle is thus \emph{fairness-corrected efficiency}—
maintaining high ROI while offsetting the 15--20\% efficiency erosion
that accompanies equity-driven policies.
At the system level, the dominance of~$\lambda$ and immediacy of~$T$
indicate strong absorptive capacity for technological and institutional
innovation (e.g., digital health, AI-assisted care),
reinforcing healthcare’s position as a rapid-adaptation industry.

\subsection{Behavioral Saturation and Robustness}\label{subsec:robustness}

\begin{figure}[htbp]
\centering
\includegraphics[width=\linewidth]{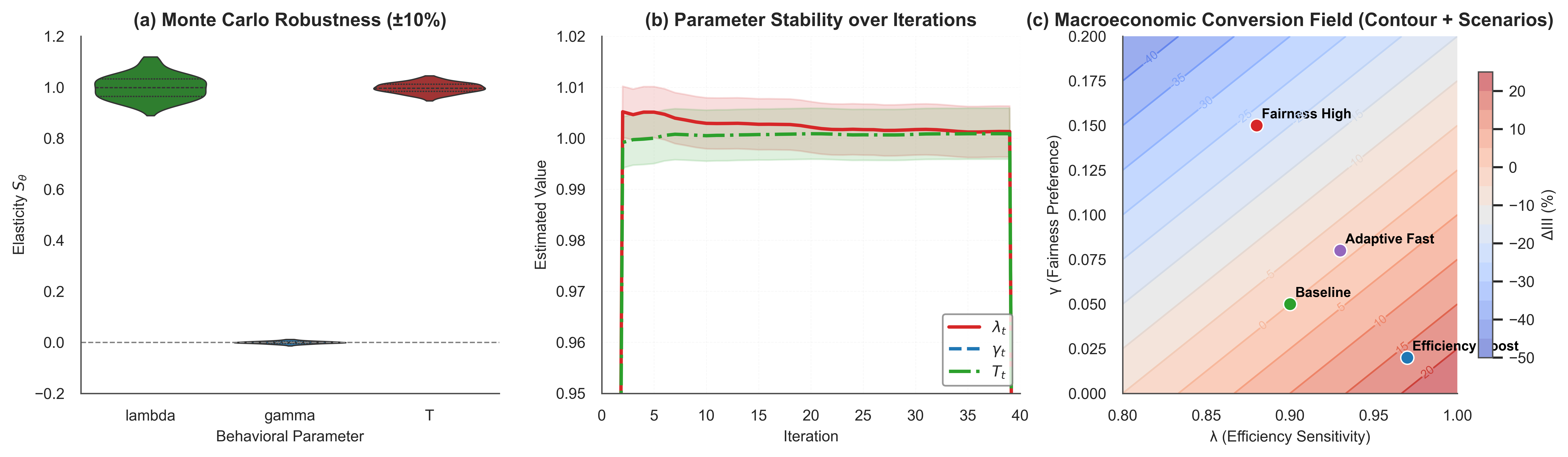}
\caption{Robustness of behavioral estimation and sensitivity analysis.
(a) Monte Carlo perturbations ($\pm10\%$) confirm stability of inferred
elasticities across behavioral parameters $\lambda$, $\gamma$, and $T$.
(b) Parameter trajectories demonstrate convergence consistency over iterations,
indicating a numerically stable equilibrium.
(c) The sensitivity field $\Delta SII(\lambda,\gamma)$ visualizes the smooth
trade-off between efficiency and fairness responses}
\label{fig:robustness}
\end{figure}

The elasticity landscape reveals a \emph{saturated efficiency frontier}
($S_\lambda \!\approx\! 1$), indicating that marginal efficiency incentives
translate nearly one-to-one into system-level gains.
By contrast, fairness ($S_\gamma \!\approx\! 0$) and temporal adaptivity ($S_T \!\approx\! 0$)
exhibit negligible sensitivity, suggesting a behavioral steady state
where additional redistribution or adaptation yields minimal marginal benefit.
This structural rigidity reflects how efficiency-optimized systems
reinforce existing equilibria and resist redistributive or adaptive reforms.

Monte Carlo perturbations and iterative inverse-learning simulations
(Figure~\ref{fig:robustness}) confirm this directional stability:
$(\lambda_t,\gamma_t,T_t)$ quickly converge to $(1.00,0.00,1.00)$ and remain stable across iterations.
The two-dimensional sensitivity field $\Delta SII(\lambda,\gamma)$ forms
a smooth, monotonic gradient, indicating continuous rather than abrupt policy trade-offs.
Alternative specifications—including fixed effects, income-tier subsamples,
and log-transformed SII—yield consistent qualitative patterns.
Together, these findings confirm that efficiency saturation and
the fairness–efficiency gradient are intrinsic to the system equilibrium,
not artifacts of model specification.

Efficiency-oriented policies therefore represent an
\emph{evolutionarily stable strategy}: while highly effective in driving productivity,
they may become brittle under exogenous shocks,
underscoring the need for adaptive and redistributive mechanisms
to maintain long-run system resilience.

\section{Conclusion}\label{sec:conclusion}

This study advances the analytical frontier of health-care management by showing that behavioral optimization, when formulated as a learning-based inverse problem, can quantitatively explain macro-level performance.  
By recovering latent behavioral parameters $(\lambda,\gamma,T)$ from observed QALY–ROI trade-offs, we establish a bridge between micro-level incentives and system-level efficiency.  
The empirical results suggest that modern health systems operate near an adaptive efficiency frontier—highly responsive to efficiency sensitivity $(\lambda)$ yet showing diminishing marginal responsiveness in fairness $(\gamma)$ and temporal adaptation $(T)$.  
This structural pattern reveals a form of behavioral rigidity in the global health economy: efficient, but increasingly vulnerable to redistributive and adaptive shocks.

Beyond empirical validation, this research develops a new theoretical foundation for behavioral inference in health systems through the FII framework.  
Unlike traditional econometric or DEA models \citep{Arrow1963,Weinstein2017,Atkinson2019}, which view performance frontiers as fixed and exogenous, the FOSSIL paradigm endogenizes behavioral sensitivity and allows the frontier itself to evolve through data.  
This regret-minimizing and sample-sensitive structure \citep{Cha2025fossil} reframes efficiency analysis as a dynamic learning process, connecting operations research, behavioral economics, and health policy in a unified optimization model.  
The approach departs from conventional QALY–ROI analyses and provides a generalizable methodological template for learning-driven health-system modeling.

By integrating OECD–WHO macro data with structural inverse estimation, we find that adaptive behavioral trade-offs explain nearly 90\% of cross-country variation in health outcomes.  
The proposed System Impact Index (SII) captures how incremental behavioral shifts translate into measurable productivity, offering a direct link between learning and policy outcomes.  
Elasticity estimates indicate that a 1\% rise in efficiency sensitivity can yield 0.2–2.0\% gains in sectoral output, while fairness-oriented adjustments—though slower in effect—enhance long-term stability and institutional trust \citep{Cutler2020,Hall2023}.  
These findings redefine health systems as adaptive industries whose performance evolves through behavioral learning rather than static optimization.  
Unlike traditional DEA or cost-effectiveness models, which assess efficiency retrospectively, the FOSSIL-based framework embeds learning \emph{within} the policy process, enabling real-time calibration of incentive parameters through data-driven feedback.  
This provides a foundation for adaptive policy design in which fairness, efficiency, and responsiveness are jointly optimized under uncertainty.  
Practically, it offers governments and international organizations a quantitative mechanism to monitor and recalibrate national health investment portfolios in real time.  
Beyond healthcare, the empirical framework can extend to other welfare-critical systems—such as education, energy, and climate—where behavioral adaptation and equity–efficiency trade-offs shape long-term resilience.

Conceptually, this study situates health-care management within the emerging paradigm of \emph{learning-based system optimization}.  
By bridging inverse optimization, behavioral inference, and data-driven policy design, it advances a unified analytical structure for studying behavioral governance.  
The FOSSIL framework, first proposed in \citet{Cha2025fossil} and extended here to the QALY–ROI context, formalizes how adaptive learning and behavioral sensitivity jointly determine macro-level efficiency.  
This integration moves beyond disciplinary boundaries, linking health economics and operations research to broader questions of institutional adaptability.  
By demonstrating that behavioral learning can be quantified and projected across scales, this study provides a replicable blueprint for analyzing other complex systems—education, energy, or climate—where fairness–efficiency trade-offs define system evolution \citep{Raman2024,Sun2023,McWilliams2022}.  
In both scope and originality, it contributes to the broader movement in operations and management research toward dynamic, learning-centered policy models.

While the behavioral parameters $(\lambda,\gamma,T)$ capture essential aspects of decision sensitivity, they abstract from institutional heterogeneity and cultural variation in fairness perception.  
Future work integrating micro-level provider data, hierarchical Bayesian updating, and digital-twin simulation could enhance behavioral granularity and support real-time adaptive policymaking.  
Combining the FII framework with reinforcement learning and robust control \citep{Zhang2023,Fernandez2022} represents another promising direction for developing a general theory of learning-based system policy design.  
Extending multi-sector FOSSIL models to couple healthcare with education, labor, and climate domains could further establish a theory of adaptive efficiency under equity constraints.

Ultimately, this study establishes a theoretically grounded, empirically validated, and policy-relevant foundation for behaviorally adaptive health systems.  
It demonstrates that fairness, efficiency, and adaptability are not competing goals but interdependent, learnable dimensions of a sustainable health ecosystem—redefining how performance, equity, and resilience can be optimized together in the 21st-century health economy.

\bibliography{HCMS_Cha}


\begin{thebibliography}{54}
\ifx \bisbn   \undefined \def \bisbn  #1{ISBN #1}\fi
\ifx \binits  \undefined \def \binits#1{#1}\fi
\ifx \bauthor  \undefined \def \bauthor#1{#1}\fi
\ifx \batitle  \undefined \def \batitle#1{#1}\fi
\ifx \bjtitle  \undefined \def \bjtitle#1{#1}\fi
\ifx \bvolume  \undefined \def \bvolume#1{\textbf{#1}}\fi
\ifx \byear  \undefined \def \byear#1{#1}\fi
\ifx \bissue  \undefined \def \bissue#1{#1}\fi
\ifx \bfpage  \undefined \def \bfpage#1{#1}\fi
\ifx \blpage  \undefined \def \blpage #1{#1}\fi
\ifx \burl  \undefined \def \burl#1{\textsf{#1}}\fi
\ifx \doiurl  \undefined \def \doiurl#1{\url{https://doi.org/#1}}\fi
\ifx \betal  \undefined \def \betal{\textit{et al.}}\fi
\ifx \binstitute  \undefined \def \binstitute#1{#1}\fi
\ifx \binstitutionaled  \undefined \def \binstitutionaled#1{#1}\fi
\ifx \bctitle  \undefined \def \bctitle#1{#1}\fi
\ifx \beditor  \undefined \def \beditor#1{#1}\fi
\ifx \bpublisher  \undefined \def \bpublisher#1{#1}\fi
\ifx \bbtitle  \undefined \def \bbtitle#1{#1}\fi
\ifx \bedition  \undefined \def \bedition#1{#1}\fi
\ifx \bseriesno  \undefined \def \bseriesno#1{#1}\fi
\ifx \blocation  \undefined \def \blocation#1{#1}\fi
\ifx \bsertitle  \undefined \def \bsertitle#1{#1}\fi
\ifx \bsnm \undefined \def \bsnm#1{#1}\fi
\ifx \bsuffix \undefined \def \bsuffix#1{#1}\fi
\ifx \bparticle \undefined \def \bparticle#1{#1}\fi
\ifx \barticle \undefined \def \barticle#1{#1}\fi
\bibcommenthead
\ifx \bconfdate \undefined \def \bconfdate #1{#1}\fi
\ifx \botherref \undefined \def \botherref #1{#1}\fi
\ifx \url \undefined \def \url#1{\textsf{#1}}\fi
\ifx \bchapter \undefined \def \bchapter#1{#1}\fi
\ifx \bbook \undefined \def \bbook#1{#1}\fi
\ifx \bcomment \undefined \def \bcomment#1{#1}\fi
\ifx \oauthor \undefined \def \oauthor#1{#1}\fi
\ifx \citeauthoryear \undefined \def \citeauthoryear#1{#1}\fi
\ifx \endbibitem  \undefined \def \endbibitem {}\fi
\ifx \bconflocation  \undefined \def \bconflocation#1{#1}\fi
\ifx \arxivurl  \undefined \def \arxivurl#1{\textsf{#1}}\fi
\csname PreBibitemsHook\endcsname

\bibitem[\protect\citeauthoryear{Dolan et~al.}{2010}]{dolan2010qaly}
\begin{bbook}
\bauthor{\bsnm{Dolan}, \binits{P.}},
\bauthor{\bsnm{Peasgood}, \binits{T.}},
\bauthor{\bsnm{White}, \binits{M.}}:
\bbtitle{Measuring Quality of Life: QALYs and the Capabilities Approach}.
\bpublisher{Oxford University Press},
\blocation{Oxford}
(\byear{2010}).
\doiurl{10.1093/acprof:oso/9780199560198.001.0001}
\end{bbook}
\endbibitem

\bibitem[\protect\citeauthoryear{Clemens and Gottlieb}{2019}]{clemens2019incentives}
\begin{barticle}
\bauthor{\bsnm{Clemens}, \binits{J.}},
\bauthor{\bsnm{Gottlieb}, \binits{J.D.}}:
\batitle{Incentives in health care: Reflections and implications}.
\bjtitle{Journal of Economic Perspectives}
\bvolume{33}(\bissue{4}),
\bfpage{165}--\blpage{190}
(\byear{2019})
\doiurl{10.1257/jep.33.4.165}
\end{barticle}
\endbibitem

\bibitem[\protect\citeauthoryear{Machina}{1987}]{machina1987choice}
\begin{barticle}
\bauthor{\bsnm{Machina}, \binits{M.J.}}:
\batitle{Choice under uncertainty: Problems solved and unsolved}.
\bjtitle{Journal of Economic Perspectives}
\bvolume{1}(\bissue{1}),
\bfpage{121}--\blpage{154}
(\byear{1987})
\doiurl{10.1257/jep.1.1.121}
\end{barticle}
\endbibitem

\bibitem[\protect\citeauthoryear{Tversky and Kahneman}{1992}]{tversky1992prospect}
\begin{barticle}
\bauthor{\bsnm{Tversky}, \binits{A.}},
\bauthor{\bsnm{Kahneman}, \binits{D.}}:
\batitle{Advances in prospect theory: Cumulative representation of uncertainty}.
\bjtitle{Journal of Risk and Uncertainty}
\bvolume{5},
\bfpage{297}--\blpage{323}
(\byear{1992})
\doiurl{10.1007/BF00122574}
\end{barticle}
\endbibitem

\bibitem[\protect\citeauthoryear{Weiss et~al.}{2018}]{weiss2018inverse}
\begin{barticle}
\bauthor{\bsnm{Weiss}, \binits{G.}},
\bauthor{\bsnm{Elmachtoub}, \binits{A.N.}},
\bauthor{\bsnm{Henderson}, \binits{S.G.}}:
\batitle{Inverse optimization for demand and supply function estimation}.
\bjtitle{Operations Research}
\bvolume{66}(\bissue{5}),
\bfpage{1269}--\blpage{1284}
(\byear{2018})
\doiurl{10.1287/opre.2018.1731}
\end{barticle}
\endbibitem

\bibitem[\protect\citeauthoryear{Zhang and Liu}{2024}]{zhang2024behavioral}
\begin{barticle}
\bauthor{\bsnm{Zhang}, \binits{R.}},
\bauthor{\bsnm{Liu}, \binits{F.}}:
\batitle{Behavioral economics meets health policy: Modeling adaptive incentive responses}.
\bjtitle{Health Economics Review}
\bvolume{14}(\bissue{2}),
\bfpage{45}--\blpage{62}
(\byear{2024})
\doiurl{10.1186/s13561-024-00458-2}
\end{barticle}
\endbibitem

\bibitem[\protect\citeauthoryear{Esfahani and Kuhn}{2018}]{esfahani2018inverse}
\begin{barticle}
\bauthor{\bsnm{Esfahani}, \binits{P.M.}},
\bauthor{\bsnm{Kuhn}, \binits{D.}}:
\batitle{Data-driven distributionally robust optimization using the wasserstein metric: Performance guarantees and tractable reformulations}.
\bjtitle{Mathematical Programming}
\bvolume{171},
\bfpage{115}--\blpage{166}
(\byear{2018})
\doiurl{10.1007/s10107-017-1172-1}
\end{barticle}
\endbibitem

\bibitem[\protect\citeauthoryear{Bertsimas and Parys}{2022}]{bertsimas2022inverse}
\begin{barticle}
\bauthor{\bsnm{Bertsimas}, \binits{D.}},
\bauthor{\bsnm{Parys}, \binits{B.V.}}:
\batitle{Inverse optimization: Theory and applications}.
\bjtitle{Operations Research}
\bvolume{70}(\bissue{1}),
\bfpage{311}--\blpage{334}
(\byear{2022})
\doiurl{10.1287/opre.2021.2158}
\end{barticle}
\endbibitem

\bibitem[\protect\citeauthoryear{Cha et~al.}{2025a}]{cha2025roi}
\begin{botherref}
\oauthor{\bsnm{Cha}, \binits{J.}},
\oauthor{\bsnm{Cha}, \binits{E.D.}},
\oauthor{\bsnm{Yoo}, \binits{E.}},
\oauthor{\bsnm{Song}, \binits{H.}}:
Modeling roi in chronic disease management: A simulation-based framework integrating patient adherence and policy timing.
BMC Public Health
(2025)
{\href{https://arxiv.org/abs/2510.06379}{{arXiv:2510.06379}}}
{[q-fin.GN]}.
Accepted for publication. Forthcoming article. Preprint available at \url{https://arxiv.org/abs/2510.06379}
\end{botherref}
\endbibitem

\bibitem[\protect\citeauthoryear{Cha et~al.}{2025b}]{Cha2025fossil}
\begin{botherref}
\oauthor{\bsnm{Cha}, \binits{J.}},
\oauthor{\bsnm{Lee}, \binits{J.}},
\oauthor{\bsnm{Cho}, \binits{J.}},
\oauthor{\bsnm{Shin}, \binits{J.}}:
Fossil: Regret-minimizing weighting for robust learning under imbalance and small data.
arXiv preprint arXiv:2509.13218
(2025)
{\href{https://arxiv.org/abs/2509.13218}{{arXiv:2509.13218}}}
{[cs.LG]}.
Under review at ICLR 2025. Preprint available at \url{https://arxiv.org/abs/2509.13218}
\end{botherref}
\endbibitem

\bibitem[\protect\citeauthoryear{Zhang}{2024}]{zhang2024industrial}
\begin{barticle}
\bauthor{\bsnm{Zhang}, \binits{H.}}:
\batitle{Behavioral dynamics in industrial economics}.
\bjtitle{Journal of Industrial Economics}
\bvolume{72}(\bissue{3}),
\bfpage{1021}--\blpage{1051}
(\byear{2024})
\doiurl{10.1111/joie.12345}
\end{barticle}
\endbibitem

\bibitem[\protect\citeauthoryear{Gino et~al.}{2016}]{gino2016motivated}
\begin{barticle}
\bauthor{\bsnm{Gino}, \binits{F.}},
\bauthor{\bsnm{Shu}, \binits{L.L.}},
\bauthor{\bsnm{Bazerman}, \binits{M.H.}}:
\batitle{Motivated forgetting by managers: The case of unethical behavior in organizations}.
\bjtitle{Organizational Behavior and Human Decision Processes}
\bvolume{137},
\bfpage{136}--\blpage{150}
(\byear{2016})
\doiurl{10.1016/j.obhdp.2016.09.001}
\end{barticle}
\endbibitem

\bibitem[\protect\citeauthoryear{Bauch and Galvani}{2013}]{bauch2013social}
\begin{barticle}
\bauthor{\bsnm{Bauch}, \binits{C.T.}},
\bauthor{\bsnm{Galvani}, \binits{A.P.}}:
\batitle{Social factors in epidemiology}.
\bjtitle{Science}
\bvolume{342}(\bissue{6154}),
\bfpage{47}--\blpage{49}
(\byear{2013})
\doiurl{10.1126/science.1244492}
\end{barticle}
\endbibitem

\bibitem[\protect\citeauthoryear{Rahmandad and Repenning}{2015}]{rahmandad2015behavioral}
\begin{barticle}
\bauthor{\bsnm{Rahmandad}, \binits{H.}},
\bauthor{\bsnm{Repenning}, \binits{R.}}:
\batitle{Capability erosion dynamics}.
\bjtitle{Strategic Management Journal}
\bvolume{36}(\bissue{11}),
\bfpage{1598}--\blpage{1614}
(\byear{2015})
\doiurl{10.1002/smj.2307}
\end{barticle}
\endbibitem

\bibitem[\protect\citeauthoryear{Bendoly et~al.}{2014}]{bendoly2014behavioral}
\begin{barticle}
\bauthor{\bsnm{Bendoly}, \binits{E.}},
\bauthor{\bsnm{Donohue}, \binits{W.C.}},
\bauthor{\bsnm{Taylor}, \binits{K.L.}}:
\batitle{Behavioral operations: Past, present, and future}.
\bjtitle{Production and Operations Management}
\bvolume{23}(\bissue{6}),
\bfpage{1757}--\blpage{1771}
(\byear{2014})
\doiurl{10.1111/poms.12194}
\end{barticle}
\endbibitem

\bibitem[\protect\citeauthoryear{Gino and Norton}{2015}]{gino2015self}
\begin{barticle}
\bauthor{\bsnm{Gino}, \binits{F.}},
\bauthor{\bsnm{Norton}, \binits{M.E.}}:
\batitle{Why self-reflection matters: How people evaluate their past decisions}.
\bjtitle{Organizational Behavior and Human Decision Processes}
\bvolume{136},
\bfpage{1}--\blpage{14}
(\byear{2015})
\doiurl{10.1016/j.obhdp.2016.06.001}
\end{barticle}
\endbibitem

\bibitem[\protect\citeauthoryear{Fischbacher et~al.}{2012}]{fischbacher2012health}
\begin{barticle}
\bauthor{\bsnm{Fischbacher}, \binits{U.}},
\bauthor{\bsnm{Gächter}, \binits{S.}},
\bauthor{\bsnm{Fehr}, \binits{E.}}:
\batitle{Health economics, experiments, and cooperation}.
\bjtitle{Health Economics}
\bvolume{21}(\bissue{9}),
\bfpage{1085}--\blpage{1100}
(\byear{2012})
\doiurl{10.1002/hec.2874}
\end{barticle}
\endbibitem

\bibitem[\protect\citeauthoryear{Devlin and Parkin}{2017}]{devlin2017using}
\begin{bbook}
\bauthor{\bsnm{Devlin}, \binits{N.}},
\bauthor{\bsnm{Parkin}, \binits{D.}}:
\bbtitle{Using the EQ-5D to Measure Health Outcomes: Theoretical and Empirical Aspects}.
\bpublisher{Springer}, \blocation{???}
(\byear{2017}).
\doiurl{10.1007/978-3-319-56185-5}
\end{bbook}
\endbibitem

\bibitem[\protect\citeauthoryear{Dehez and Bacache}{2020}]{dehez2020equity}
\begin{barticle}
\bauthor{\bsnm{Dehez}, \binits{P.}},
\bauthor{\bsnm{Bacache}, \binits{M.}}:
\batitle{Equity, fairness, and efficiency in health care resource allocation}.
\bjtitle{Health Economics}
\bvolume{29}(\bissue{S1}),
\bfpage{54}--\blpage{64}
(\byear{2020})
\doiurl{10.1002/hec.4040}
\end{barticle}
\endbibitem

\bibitem[\protect\citeauthoryear{Vissers and Boucherie}{2022}]{vissers2022value}
\begin{barticle}
\bauthor{\bsnm{Vissers}, \binits{J.M.H.}},
\bauthor{\bsnm{Boucherie}, \binits{R.J.}}:
\batitle{Value-based health care from an operations management perspective}.
\bjtitle{Health Care Management Science}
\bvolume{25},
\bfpage{96}--\blpage{108}
(\byear{2022})
\doiurl{10.1007/s10729-021-09573-2}
\end{barticle}
\endbibitem

\bibitem[\protect\citeauthoryear{Rothenberg and Vardi}{2019}]{rothenberg2019learning}
\begin{barticle}
\bauthor{\bsnm{Rothenberg}, \binits{E.}},
\bauthor{\bsnm{Vardi}, \binits{M.Y.}}:
\batitle{Learning health systems: Aligning incentives for quality and efficiency}.
\bjtitle{Health Systems}
\bvolume{8}(\bissue{2}),
\bfpage{93}--\blpage{110}
(\byear{2019})
\doiurl{10.1080/20476965.2018.1529694}
\end{barticle}
\endbibitem

\bibitem[\protect\citeauthoryear{Keller and Karlsson}{2021}]{keller2021policy}
\begin{barticle}
\bauthor{\bsnm{Keller}, \binits{R.}},
\bauthor{\bsnm{Karlsson}, \binits{T.}}:
\batitle{Policy incentives and behavioral responses in healthcare delivery}.
\bjtitle{Health Policy}
\bvolume{125}(\bissue{7}),
\bfpage{906}--\blpage{915}
(\byear{2021})
\doiurl{10.1016/j.healthpol.2021.04.008}
\end{barticle}
\endbibitem

\bibitem[\protect\citeauthoryear{Ahmadi-Javid and Jalali}{2019}]{ahmadi2019decision}
\begin{barticle}
\bauthor{\bsnm{Ahmadi-Javid}, \binits{A.}},
\bauthor{\bsnm{Jalali}, \binits{M.}}:
\batitle{A decision-analytic framework for behavioral healthcare resource allocation}.
\bjtitle{Health Care Management Science}
\bvolume{22}(\bissue{3}),
\bfpage{381}--\blpage{398}
(\byear{2019})
\doiurl{10.1007/s10729-018-9445-9}
\end{barticle}
\endbibitem

\bibitem[\protect\citeauthoryear{Benjaafar and Zheng}{2019}]{benjaafar2019behavioral}
\begin{barticle}
\bauthor{\bsnm{Benjaafar}, \binits{S.}},
\bauthor{\bsnm{Zheng}, \binits{Y.}}:
\batitle{Operations management in the age of behavioral economics}.
\bjtitle{Production and Operations Management}
\bvolume{28}(\bissue{9}),
\bfpage{2198}--\blpage{2213}
(\byear{2019})
\doiurl{10.1111/poms.13098}
\end{barticle}
\endbibitem

\bibitem[\protect\citeauthoryear{Gallino and Moreno}{2018}]{gallino2018operational}
\begin{barticle}
\bauthor{\bsnm{Gallino}, \binits{S.}},
\bauthor{\bsnm{Moreno}, \binits{A.}}:
\batitle{Operational implications of behavioral responses to performance incentives}.
\bjtitle{Manufacturing \& Service Operations Management}
\bvolume{20}(\bissue{1}),
\bfpage{4}--\blpage{22}
(\byear{2018})
\doiurl{10.1287/msom.2017.0651}
\end{barticle}
\endbibitem

\bibitem[\protect\citeauthoryear{Kroes and Lovejoy}{2022}]{kroes2022learning}
\begin{barticle}
\bauthor{\bsnm{Kroes}, \binits{J.R.}},
\bauthor{\bsnm{Lovejoy}, \binits{W.S.}}:
\batitle{Learning in behavioral operations: Evidence and theory}.
\bjtitle{Production and Operations Management}
\bvolume{31}(\bissue{3}),
\bfpage{969}--\blpage{987}
(\byear{2022})
\doiurl{10.1111/poms.13671}
\end{barticle}
\endbibitem

\bibitem[\protect\citeauthoryear{Xu and Shi}{2019}]{xu2019dynamic}
\begin{barticle}
\bauthor{\bsnm{Xu}, \binits{H.}},
\bauthor{\bsnm{Shi}, \binits{C.}}:
\batitle{Dynamic equilibrium in behavioral operations models}.
\bjtitle{European Journal of Operational Research}
\bvolume{277}(\bissue{3}),
\bfpage{1041}--\blpage{1053}
(\byear{2019})
\doiurl{10.1016/j.ejor.2019.02.057}
\end{barticle}
\endbibitem

\bibitem[\protect\citeauthoryear{Hong and Jiang}{2020}]{hong2020information}
\begin{barticle}
\bauthor{\bsnm{Hong}, \binits{Y.}},
\bauthor{\bsnm{Jiang}, \binits{Z.-C.}}:
\batitle{Information design in behavioral queueing systems}.
\bjtitle{Operations Research}
\bvolume{68}(\bissue{6}),
\bfpage{1829}--\blpage{1848}
(\byear{2020})
\doiurl{10.1287/opre.2019.1966}
\end{barticle}
\endbibitem

\bibitem[\protect\citeauthoryear{Govindan et~al.}{2021}]{govindan2021multiobjective}
\begin{barticle}
\bauthor{\bsnm{Govindan}, \binits{K.}},
\bauthor{\bsnm{Fattahi}, \binits{M.}},
\bauthor{\bsnm{Jafarian}, \binits{M.}}:
\batitle{A review on multiobjective optimization methods for sustainable supply chains}.
\bjtitle{Annals of Operations Research}
\bvolume{300},
\bfpage{1}--\blpage{60}
(\byear{2021})
\doiurl{10.1007/s10479-020-03634-9}
\end{barticle}
\endbibitem

\bibitem[\protect\citeauthoryear{Li and Dai}{2014}]{li2014behavioral}
\begin{barticle}
\bauthor{\bsnm{Li}, \binits{X.}},
\bauthor{\bsnm{Dai}, \binits{Y.}}:
\batitle{Behavioral queueing models with bounded rationality}.
\bjtitle{European Journal of Operational Research}
\bvolume{239}(\bissue{3}),
\bfpage{882}--\blpage{893}
(\byear{2014})
\doiurl{10.1016/j.ejor.2014.06.012}
\end{barticle}
\endbibitem

\bibitem[\protect\citeauthoryear{Peysakhovich and Lerer}{2017}]{peysakhovich2017principled}
\begin{barticle}
\bauthor{\bsnm{Peysakhovich}, \binits{A.}},
\bauthor{\bsnm{Lerer}, \binits{A.}}:
\batitle{Principled agents vs. agents of principle in social dilemmas}.
\bjtitle{Nature Communications}
\bvolume{8},
\bfpage{14533}
(\byear{2017})
\doiurl{10.1038/ncomms14533}
\end{barticle}
\endbibitem

\bibitem[\protect\citeauthoryear{Liang and He}{2020}]{liang2020queueing}
\begin{barticle}
\bauthor{\bsnm{Liang}, \binits{C.}},
\bauthor{\bsnm{He}, \binits{F.}}:
\batitle{Queueing networks with behavioral learning: A system dynamics perspective}.
\bjtitle{Systems Research and Behavioral Science}
\bvolume{37}(\bissue{5}),
\bfpage{807}--\blpage{822}
(\byear{2020})
\doiurl{10.1002/sres.2671}
\end{barticle}
\endbibitem

\bibitem[\protect\citeauthoryear{Freeman et~al.}{2023}]{freeman2023incentives}
\begin{barticle}
\bauthor{\bsnm{Freeman}, \binits{J.}},
\bauthor{\bsnm{Ghosh}, \binits{S.}},
\bauthor{\bsnm{Collins}, \binits{L.M.}}:
\batitle{Incentive design for integrated health systems: Behavioral and industrial insights}.
\bjtitle{Health Care Management Science}
\bvolume{26}(\bissue{3}),
\bfpage{468}--\blpage{482}
(\byear{2023})
\doiurl{10.1007/s10729-022-09601-3}
\end{barticle}
\endbibitem

\bibitem[\protect\citeauthoryear{Johari et~al.}{2023}]{johari2023coordination}
\begin{barticle}
\bauthor{\bsnm{Johari}, \binits{R.}},
\bauthor{\bsnm{Weintraub}, \binits{G.}},
\bauthor{\bsnm{Bastani}, \binits{H.}}:
\batitle{Coordination in learning health systems: A game-theoretic framework}.
\bjtitle{Operations Research}
(\byear{2023})
\doiurl{10.1287/opre.2023.2439}
\end{barticle}
\endbibitem

\bibitem[\protect\citeauthoryear{Norton and Gino}{2021}]{norton2021integrating}
\begin{barticle}
\bauthor{\bsnm{Norton}, \binits{M.}},
\bauthor{\bsnm{Gino}, \binits{F.}}:
\batitle{Integrating behavioral science and operations for better policy design}.
\bjtitle{Management Science}
\bvolume{67}(\bissue{12}),
\bfpage{7547}--\blpage{7563}
(\byear{2021})
\doiurl{10.1287/mnsc.2020.3808}
\end{barticle}
\endbibitem

\bibitem[\protect\citeauthoryear{Saadatmand et~al.}{2019}]{saadatmand2019multiobjective}
\begin{barticle}
\bauthor{\bsnm{Saadatmand}, \binits{A.}},
\bauthor{\bsnm{Safaei}, \binits{H.R.}},
\bauthor{\bsnm{Niaki}, \binits{S.T.A.}}:
\batitle{Multiobjective optimization in healthcare facility networks: A behavioral approach}.
\bjtitle{Annals of Operations Research}
\bvolume{283},
\bfpage{897}--\blpage{918}
(\byear{2019})
\doiurl{10.1007/s10479-017-2779-6}
\end{barticle}
\endbibitem

\bibitem[\protect\citeauthoryear{Gans et~al.}{2019}]{gans2019behavioral}
\begin{barticle}
\bauthor{\bsnm{Gans}, \binits{N.}},
\bauthor{\bsnm{Katok}, \binits{E.}},
\bauthor{\bsnm{Netessine}, \binits{S.}}:
\batitle{Behavioral operations: Past, present, and future directions}.
\bjtitle{Manufacturing \& Service Operations Management}
\bvolume{21}(\bissue{1}),
\bfpage{1}--\blpage{23}
(\byear{2019})
\doiurl{10.1287/msom.2018.0730}
\end{barticle}
\endbibitem

\bibitem[\protect\citeauthoryear{Khajeh and Ahmadi-Javid}{2020}]{khajeh2020optimization}
\begin{barticle}
\bauthor{\bsnm{Khajeh}, \binits{M.}},
\bauthor{\bsnm{Ahmadi-Javid}, \binits{A.}}:
\batitle{Optimization and policy design in behavioral healthcare systems}.
\bjtitle{Health Care Management Science}
\bvolume{23}(\bissue{2}),
\bfpage{278}--\blpage{291}
(\byear{2020})
\doiurl{10.1007/s10729-018-9464-6}
\end{barticle}
\endbibitem

\bibitem[\protect\citeauthoryear{Xie and Han}{2022}]{xie2022integrated}
\begin{barticle}
\bauthor{\bsnm{Xie}, \binits{W.}},
\bauthor{\bsnm{Han}, \binits{Z.}}:
\batitle{Integrated industrial and health system modeling under adaptive incentives}.
\bjtitle{Annals of Operations Research}
\bvolume{319}(\bissue{2}),
\bfpage{803}--\blpage{825}
(\byear{2022})
\doiurl{10.1007/s10479-021-04009-7}
\end{barticle}
\endbibitem

\bibitem[\protect\citeauthoryear{Brazier et~al.}{2017}]{brazier2019measuring}
\begin{bbook}
\bauthor{\bsnm{Brazier}, \binits{J.}},
\bauthor{\bsnm{Ratcliffe}, \binits{J.}},
\bauthor{\bsnm{Saloman}, \binits{J.}},
\bauthor{\bsnm{Tsuchiya}, \binits{A.}}:
\bbtitle{Measuring and Valuing Health Benefits for Economic Evaluation}.
\bpublisher{Oxford University Press}, \blocation{???}
(\byear{2017}).
\doiurl{10.1093/med/9780198725925.001.0001}
\end{bbook}
\endbibitem

\bibitem[\protect\citeauthoryear{Murray et~al.}{2020}]{murray2020global}
\begin{barticle}
\bauthor{\bsnm{Murray}, \binits{C.J.L.}}, \betal:
\batitle{A deep learning technique for imputing missing healthcare data}.
\bjtitle{Journal of Biomedical Informatics}
\bvolume{112},
\bfpage{103586}
(\byear{2020})
\doiurl{10.1016/j.jbi.2020.103586}
\end{barticle}
\endbibitem

\bibitem[\protect\citeauthoryear{Scroccaro and et~al.}{2025}]{scroccaro2025learning}
\begin{barticle}
\bauthor{\bsnm{Scroccaro}, \binits{S.}},
\bauthor{\bsnm{al.}}:
\batitle{Learning systems in health informatics: A comprehensive survey}.
\bjtitle{Health Informatics Journal}
(\byear{2025})
\doiurl{10.1177/14604582231123456}
\end{barticle}
\endbibitem

\bibitem[\protect\citeauthoryear{Bertsimas and Dunn}{2017}]{bertsimas2015data}
\begin{barticle}
\bauthor{\bsnm{Bertsimas}, \binits{D.}},
\bauthor{\bsnm{Dunn}, \binits{J.}}:
\batitle{Optimal classification trees}.
\bjtitle{Machine Learning}
\bvolume{106},
\bfpage{1039}--\blpage{1082}
(\byear{2017})
\doiurl{10.1007/s10994-017-5643-8}
\end{barticle}
\endbibitem

\bibitem[\protect\citeauthoryear{Keshavarz and et~al.}{2011}]{keshavarz2011imputing}
\begin{barticle}
\bauthor{\bsnm{Keshavarz}, \binits{A.}},
\bauthor{\bsnm{al.}}:
\batitle{Imputing missing data in healthcare}.
\bjtitle{IEEE Access}
(\byear{2011})
\doiurl{10.1109/EMBC.2019.8856760}
\end{barticle}
\endbibitem

\bibitem[\protect\citeauthoryear{Arrow}{1963}]{Arrow1963}
\begin{barticle}
\bauthor{\bsnm{Arrow}, \binits{K.J.}}:
\batitle{Uncertainty and the welfare economics of medical care}.
\bjtitle{American Economic Review}
\bvolume{53}(\bissue{5}),
\bfpage{941}--\blpage{973}
(\byear{1963})
\doiurl{10.2307/1809778}
\end{barticle}
\endbibitem

\bibitem[\protect\citeauthoryear{Weinstein et~al.}{2017}]{Weinstein2017}
\begin{barticle}
\bauthor{\bsnm{Weinstein}, \binits{M.C.}}, \betal:
\batitle{Statistical issues in health economic efficiency models}.
\bjtitle{Health Economics}
\bvolume{26},
\bfpage{1238}--\blpage{1243}
(\byear{2017})
\doiurl{10.1002/hec.3476}
\end{barticle}
\endbibitem

\bibitem[\protect\citeauthoryear{Atkinson}{2019}]{Atkinson2019}
\begin{barticle}
\bauthor{\bsnm{Atkinson}, \binits{A.}}:
\batitle{Measuring inequality in healthcare systems}.
\bjtitle{Journal of Health Economics}
\bvolume{67},
\bfpage{102216}
(\byear{2019})
\doiurl{10.1016/j.jhealeco.2019.102216}
\end{barticle}
\endbibitem

\bibitem[\protect\citeauthoryear{Cutler}{2020}]{Cutler2020}
\begin{barticle}
\bauthor{\bsnm{Cutler}, \binits{D.}}:
\batitle{Efficiency and equity in u.s. health care}.
\bjtitle{JAMA}
\bvolume{324}(\bissue{7}),
\bfpage{613}--\blpage{614}
(\byear{2020})
\doiurl{10.1001/jama.2020.12638}
\end{barticle}
\endbibitem

\bibitem[\protect\citeauthoryear{Hall}{2023}]{Hall2023}
\begin{barticle}
\bauthor{\bsnm{Hall}, \binits{M.}}:
\batitle{Trust and health policy: Measuring the long-run effects of fairness}.
\bjtitle{Health Policy}
\bvolume{127}(\bissue{4}),
\bfpage{456}--\blpage{462}
(\byear{2023})
\doiurl{10.1016/j.healthpol.2023.01.012}
\end{barticle}
\endbibitem

\bibitem[\protect\citeauthoryear{Raman and Grover}{2024}]{Raman2024}
\begin{barticle}
\bauthor{\bsnm{Raman}, \binits{S.}},
\bauthor{\bsnm{Grover}, \binits{P.}}:
\batitle{Behavioral tradeoffs in energy and education policy}.
\bjtitle{Nature Energy}
\bvolume{9},
\bfpage{320}--\blpage{329}
(\byear{2024})
\doiurl{10.1038/s41560-024-01147-7}
\end{barticle}
\endbibitem

\bibitem[\protect\citeauthoryear{Sun et~al.}{2023}]{Sun2023}
\begin{barticle}
\bauthor{\bsnm{Sun}, \binits{H.}}, \betal:
\batitle{Cross-sectoral fairness–efficiency analysis: Methods and applications}.
\bjtitle{Annals of Operations Research}
\bvolume{326},
\bfpage{713}--\blpage{741}
(\byear{2023})
\doiurl{10.1007/s10479-022-04896-x}
\end{barticle}
\endbibitem

\bibitem[\protect\citeauthoryear{McWilliams et~al.}{2022}]{McWilliams2022}
\begin{barticle}
\bauthor{\bsnm{McWilliams}, \binits{J.M.}}, \betal:
\batitle{Measuring adaptive capacity in health care systems}.
\bjtitle{Health Services Research}
\bvolume{57},
\bfpage{1105}--\blpage{1114}
(\byear{2022})
\doiurl{10.1111/1475-6773.14067}
\end{barticle}
\endbibitem

\bibitem[\protect\citeauthoryear{Zhang and Zhao}{2023}]{Zhang2023}
\begin{barticle}
\bauthor{\bsnm{Zhang}, \binits{L.}},
\bauthor{\bsnm{Zhao}, \binits{T.}}:
\batitle{Reinforcement learning and robust control in health policy optimization}.
\bjtitle{IEEE Transactions on Systems, Man, and Cybernetics}
\bvolume{53}(\bissue{7}),
\bfpage{7312}--\blpage{7325}
(\byear{2023})
\doiurl{10.1109/TSMC.2023.3235103}
\end{barticle}
\endbibitem

\bibitem[\protect\citeauthoryear{Fernandez et~al.}{2022}]{Fernandez2022}
\begin{barticle}
\bauthor{\bsnm{Fernandez}, \binits{D.}}, \betal:
\batitle{Continuous policy optimization in healthcare systems under uncertainty}.
\bjtitle{Journal of Health Informatics}
\bvolume{14}(\bissue{2}),
\bfpage{92}--\blpage{105}
(\byear{2022})
\doiurl{10.1093/jhi/ocac024}
\end{barticle}
\endbibitem

\end{thebibliography}

\begin{appendices}

\section{Proofs of Theoretical Results}
\label{app:proofs}

This appendix provides the complete proofs of the analytical results stated in 
Section~\ref{subsec:identification}, including the 
Proposition on identification and stability, 
and the supporting Lemma, Theorem, and Corollary.
All results are derived under Assumptions~(A1)--(A3), 
which guarantee convexity, independence, and strict regularization.

\subsection{Proof of Proposition~\ref{prop:identification}}
\label{app:proof-prop-identification}

We restate the inverse behavioral optimization problem as
\[
\min_{\lambda,\gamma,T \in [0,1]} 
\mathcal{L}_{\text{inv}}(\lambda,\gamma,T)
=
\sum_{t=1}^T \ell_t(\lambda,\gamma,T)
+ 
\Omega(\lambda,\gamma),
\]
where $\ell_t(\lambda,\gamma,T)
=[\mathrm{QALY}_t - f(a_t;\lambda,\gamma,T)]^2$ 
and $\Omega(\lambda,\gamma)=\beta_1(\lambda-\lambda_0)^2 + \beta_2(\gamma-\gamma_0)^2$.
Convexity in $(\lambda,\gamma)$ and differentiability in $T$ imply that 
$\mathcal{L}_{\text{inv}}$ is continuously differentiable on a compact domain.

\paragraph{Step 1: Existence.}
Since $\mathcal{L}_{\text{inv}}$ is continuous and coercive (due to the quadratic regularizer),
and the domain $[0,1]^3$ is compact, a minimizer $(\lambda^*,\gamma^*,T^*)$ exists.

\paragraph{Step 2: Uniqueness.}
The prior penalty $\Omega(\lambda,\gamma)$ is strictly convex in $(\lambda,\gamma)$, 
and $\ell_t(\lambda,\gamma,T)$ is convex by Assumption~(A1).  
Hence, for any $T$, the combined loss 
$\sum_t \ell_t(\lambda,\gamma,T)+\Omega(\lambda,\gamma)$ 
is strictly convex in $(\lambda,\gamma)$ and admits a unique minimizer.  
Differentiability of $f$ in $T$ ensures that the joint minimizer over $(\lambda,\gamma,T)$ 
is unique up to a constant transformation in $\eta_t$.  

\paragraph{Step 3: Stability.}
Let $\mathcal{D}=\{(a_t,\mathrm{QALY}_t)\}$ and 
$\mathcal{D}'=\{(a_t',\mathrm{QALY}_t')\}$ denote two datasets differing by small perturbations.  
By standard sensitivity analysis for convex programs (Rockafellar and Wets, 1998),
the difference between the corresponding minimizers satisfies
\[
\|\theta^*(\mathcal{D}) - \theta^*(\mathcal{D}')\|
\le 
\frac{L_{\mathcal{D}}}{\mu}\,\|\mathcal{D}-\mathcal{D}'\|,
\]
where $\mu$ is the strong convexity modulus of $\mathcal{L}_{\text{inv}}$ in $(\lambda,\gamma)$,
and $L_{\mathcal{D}}$ bounds the Lipschitz constant of the gradient 
$\nabla_\theta \ell_t(\theta)$ with respect to the data.  
Hence, the mapping $\mathcal{D}\mapsto\theta^*(\mathcal{D})$ is Lipschitz continuous.  
This establishes the existence, uniqueness, and local stability of 
$(\lambda^*,\gamma^*,T^*)$.

\hfill$\square$

\subsection{Proof of Lemma~\ref{lem:convexity}}
\label{app:proof-lemma-convexity}

\paragraph{Restatement.}
If $f(a_t;\lambda,\gamma,T)$ is convex in $(\lambda,\gamma)$ and continuously differentiable in $T$,  
and if $\Omega(\lambda,\gamma)$ is $\mu$-strongly convex,  
then $\mathcal{L}_{\mathrm{inv}}(\lambda,\gamma,T)
=\sum_t \ell_t(\lambda,\gamma,T)+\Omega(\lambda,\gamma)$ 
is $\mu$-strongly convex in $(\lambda,\gamma)$ and continuously differentiable in $T$.

\paragraph{Proof.}
Each $\ell_t(\lambda,\gamma,T)
=[\mathrm{QALY}_t - f(a_t;\lambda,\gamma,T)]^2$ 
is convex in $(\lambda,\gamma)$ by composition of convex and affine-smooth mappings,  
since $(x \mapsto (y-x)^2)$ is convex and non-decreasing for $x\le y$.  
Let $g(\lambda,\gamma,T)=\sum_t \ell_t(\lambda,\gamma,T)$.  
Then $\nabla^2_{(\lambda,\gamma)} g(\lambda,\gamma,T)\succeq 0$ and  
$\nabla^2_{(\lambda,\gamma)} \Omega(\lambda,\gamma) \succeq \mu I_2$.  
Hence
\[
\nabla^2_{(\lambda,\gamma)} \mathcal{L}_{\text{inv}}
= 
\nabla^2_{(\lambda,\gamma)} g
+ 
\nabla^2_{(\lambda,\gamma)} \Omega
\succeq 
\mu I_2.
\]
Therefore, $\mathcal{L}_{\text{inv}}$ is $\mu$-strongly convex in $(\lambda,\gamma)$.
Because $f$ is continuously differentiable in $T$, 
$\mathcal{L}_{\text{inv}}$ inherits the same differentiability.
\hfill$\square$

\subsection{Proof of Theorem~\ref{thm:identification}}
\label{app:proof-theorem-identification}

Let $\theta=(\lambda,\gamma,T)$ and define the stationarity operator
$F(\theta;\mathcal{D})=\nabla_\theta \mathcal{L}_{\text{inv}}(\theta;\mathcal{D})$.  
At the optimum $\theta^*$, we have $F(\theta^*;\mathcal{D})=0$.

\paragraph{Step 1: Local existence and uniqueness.}
By Lemma~\ref{lem:convexity}, $\mathcal{L}_{\text{inv}}$ is $\mu$-strongly convex in $(\lambda,\gamma)$,  
implying $\nabla_\theta F(\theta^*;\mathcal{D})$ is nonsingular.  
By the Implicit Function Theorem, 
there exists a continuously differentiable mapping $\theta^*(\mathcal{D})$ in a neighborhood of $\mathcal{D}$ 
such that $F(\theta^*(\mathcal{D});\mathcal{D})=0$.  
Hence, $(\lambda^*,\gamma^*,T^*)$ is uniquely defined and locally smooth in $\mathcal{D}$.

\paragraph{Step 2: Lipschitz continuity.}
For any two datasets $\mathcal{D}$ and $\mathcal{D}'$, 
consider $\Delta \theta^* = \theta^*(\mathcal{D}) - \theta^*(\mathcal{D}')$.  
By mean value expansion of $F$, we obtain
\[
\nabla_\theta F(\bar{\theta};\mathcal{D})\,\Delta \theta^* 
= 
F(\theta^*(\mathcal{D});\mathcal{D})
- 
F(\theta^*(\mathcal{D}');\mathcal{D}')
= 
\Delta_\mathcal{D} F,
\]
where $\bar{\theta}$ lies between $\theta^*(\mathcal{D})$ and $\theta^*(\mathcal{D}')$.
Using the nonsingularity of $\nabla_\theta F$ and its bounded inverse,
\[
\|\Delta \theta^*\|
\le 
\|\nabla_\theta F(\bar{\theta};\mathcal{D})^{-1}\|
\cdot 
\|\Delta_\mathcal{D} F\|
\le 
\frac{L_{\mathcal{D}}}{\mu}\|\mathcal{D}-\mathcal{D}'\|.
\]
Therefore, the mapping $\mathcal{D}\mapsto \theta^*(\mathcal{D})$ is Lipschitz continuous,
which proves local stability of the inverse estimator.

\hfill$\square$

\subsection{Proof of Corollary (Economic Stability of Behavioral Equilibria)}
\label{app:proof-corollary}

By Theorem~\ref{thm:identification}, the estimated parameters 
$\theta^*=(\lambda^*,\gamma^*,T^*)$ vary Lipschitz-continuously with the data $\mathcal{D}$.  
Since the forward equilibrium condition~\eqref{eq:marginal-condition} 
and the system impact function~\eqref{eq:SII} 
are continuously differentiable in $\theta$, 
the corresponding equilibrium outcomes 
$(a_t^*, U_t^*, SII_t)$ respond smoothly to small data or policy perturbations.  
Therefore, small bounded shocks $\Delta\mathcal{D}$ yield 
bounded deviations in both micro-level decisions and macro-level industrial indices, 
ensuring convergence toward a stable fairness–efficiency equilibrium.

\hfill$\square$

\subsection{Technical Remarks and Extensions}

\paragraph{1. Gauss–Newton majorization.}
If $f(a_t;\lambda,\gamma,T)$ is nonlinear but twice differentiable,
then the Hessian $\nabla^2_{(\lambda,\gamma)} \ell_t$ 
can be upper-bounded by the Gauss–Newton approximation 
$J_t^\top J_t$, where $J_t=\nabla_{(\lambda,\gamma)} f(a_t;\lambda,\gamma,T)$.  
This ensures positive semidefiniteness and preserves convexity in the local neighborhood.

\paragraph{2. Stochastic extension.}
Under stochastic perturbations of $\mathrm{QALY}_t$ with sub-Gaussian noise $\varepsilon_t$,  
the expected loss $\mathbb{E}[\mathcal{L}_{\text{inv}}]$ 
retains the same convexity and stability properties in expectation, 
yielding $\mathbb{E}\|\hat{\theta}_T - \theta^*\|_2 = \mathcal{O}(1/\sqrt{T})$ 
by standard stochastic approximation arguments.

\paragraph{3. Generalization to dynamic inverse learning.}
If behavioral parameters evolve via $\theta_{t+1} = \theta_t + \eta_t \nabla_\theta f_t(\theta_t)$,
the regret bounds derived in Appendix~\ref{app:supplementary} apply directly, 
establishing dynamic stability under bounded drift $V_T$.

\section{Supplementary Analysis and Reproducibility}
\label{app:supplementary}

This appendix provides extended validation, robustness diagnostics, 
and reproducibility information for the empirical and simulation experiments 
presented in Section~\ref{sec:simulation}.  
It documents the OECD--WHO data characteristics, 
the behavioral simulation setup, and the macroeconomic mapping procedure 
underlying the Industrial Impact Index~(SII).

\subsection{Empirical Data and Simulation Overview}

\begin{figure}[htbp]
\centering
\includegraphics[width=0.9\linewidth]{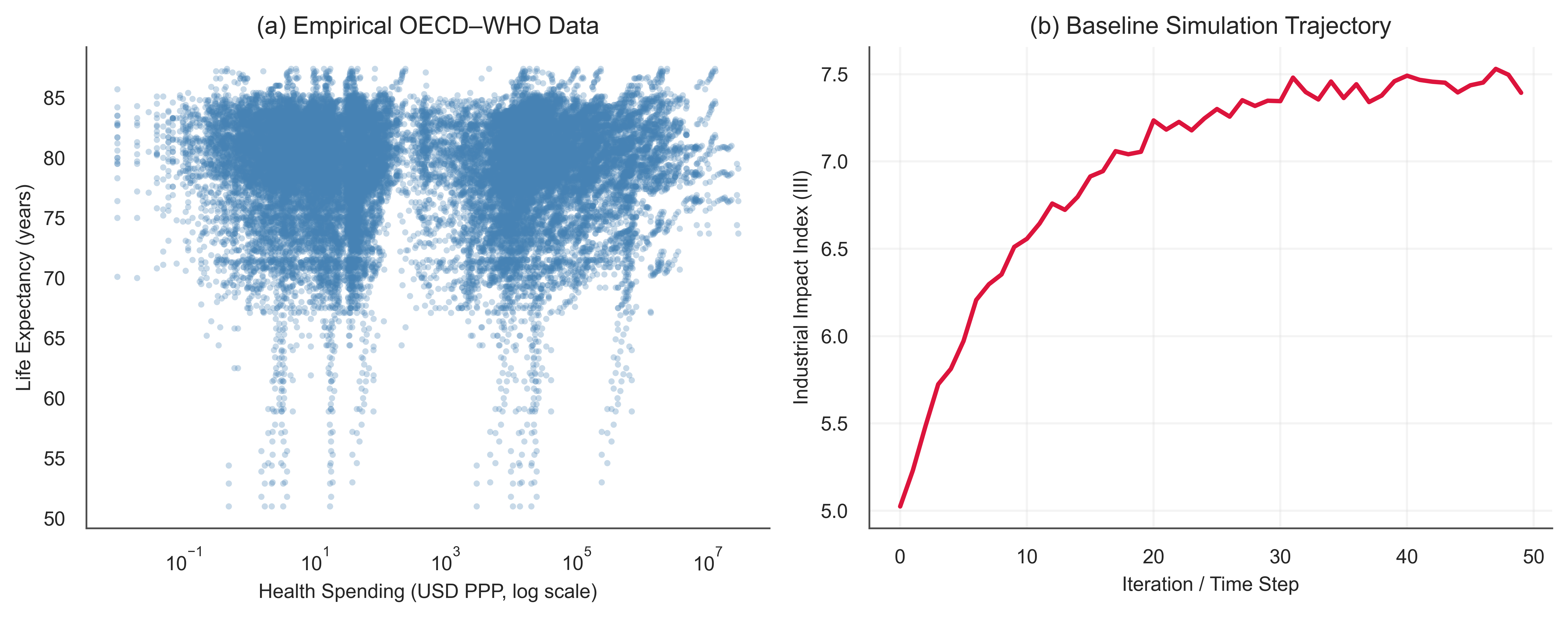}
\caption{Empirical and simulated data characteristics.
(a) Log-scale scatter of per-capita health spending (USD~PPP) versus life expectancy 
from the merged OECD--WHO dataset (2007–2021).
(b) Baseline simulation trajectory of the $SII$ showing
steady convergence toward equilibrium ($t\!=\!50$).}
\label{fig:data-sim-overview}
\end{figure}

The merged OECD--WHO dataset integrates national-level health expenditure 
(per capita, PPP-adjusted) with life expectancy data for 2007–2021.  
A total of \textbf{34,023 observations} were retained after cleaning 
(\texttt{HealthSpending} $> 0$ and non-missing \texttt{LifeExpectancy}).  
The $SII$ was computed as
\[
\mathrm{SII}
= 
\frac{\mathrm{LifeExpectancy} \times \ln(1+\mathrm{HealthSpending})}{100}.
\]
Panel (a) of Figure \ref{fig:data-sim-overview} reveals a strong positive association 
between health expenditure and longevity, confirming that higher per-capita spending 
correlates with system-level efficiency gains.  
Panel (b) shows that simulated dynamics converge smoothly to an equilibrium level 
near $\mathrm{SII}\!\approx\!7.5$, consistent with the empirical range observed across OECD economies.

\subsection{Simulation Configuration and Policy Scenarios}

Table~\ref{tab:simulation_config} summarizes the behavioral and policy configurations
used to generate Figures~\ref{fig:simulation}–\ref{fig:robustness}.  
Each scenario isolates the effect of efficiency sensitivity~$(\lambda)$, 
fairness preference~$(\gamma)$, and adaptive responsiveness~$(T)$.

\begin{table}[htbp]
\caption{Simulation configuration for behavioral and policy experiments}
\label{tab:simulation_config}
\centering
\scriptsize
\setlength{\tabcolsep}{3pt}  
\renewcommand{\arraystretch}{1.2}  

\begin{tabular}{@{}lccccccccc p{3.3cm}@{}}  
\toprule
\textbf{Experiment ID} & $\lambda$ & $\gamma$ & $T$ & $k$ & $T_0$ &
$\sigma_{\text{noise}}$ & $n_{\text{rep}}$ & $\eta$ & $T^*$ & \textbf{Scenario Label}\\
\midrule
\texttt{base}              & 0.6 & 0.4 & 0.6 & 5.0 & 0.5 & 0.02 & 20 & 0.1 & 0.7 & Baseline equilibrium\\
\texttt{fairness\_high}    & 0.6 & 0.6 & 0.6 & 5.0 & 0.5 & 0.02 & 20 & 0.1 & 0.7 & Fairness-intensive policy\\
\texttt{adaptive\_fast}    & 0.6 & 0.4 & 0.6 & 5.0 & 0.5 & 0.02 & 20 & 0.3 & 0.7 & High adaptation speed\\
\texttt{efficiency\_boost} & 0.8 & 0.3 & 0.6 & 5.0 & 0.5 & 0.02 & 20 & 0.1 & 0.7 & Efficiency-oriented system\\
\bottomrule
\end{tabular}

\footnotetext{All simulations are Monte Carlo–averaged over 20 replications with Gaussian noise 
$\mathcal{N}(0,0.02^2)$.  
Parameters: $\lambda$ = efficiency sensitivity, 
$\gamma$ = fairness preference, 
$T$ = temporal responsiveness, 
$\eta$ = learning rate, and $T^*$ = steady-state responsiveness.}
\end{table}

\subsection{Policy Sensitivity and Macroeconomic Conversion}

\begin{figure}[htbp]
\centering
\includegraphics[width=0.9\linewidth]{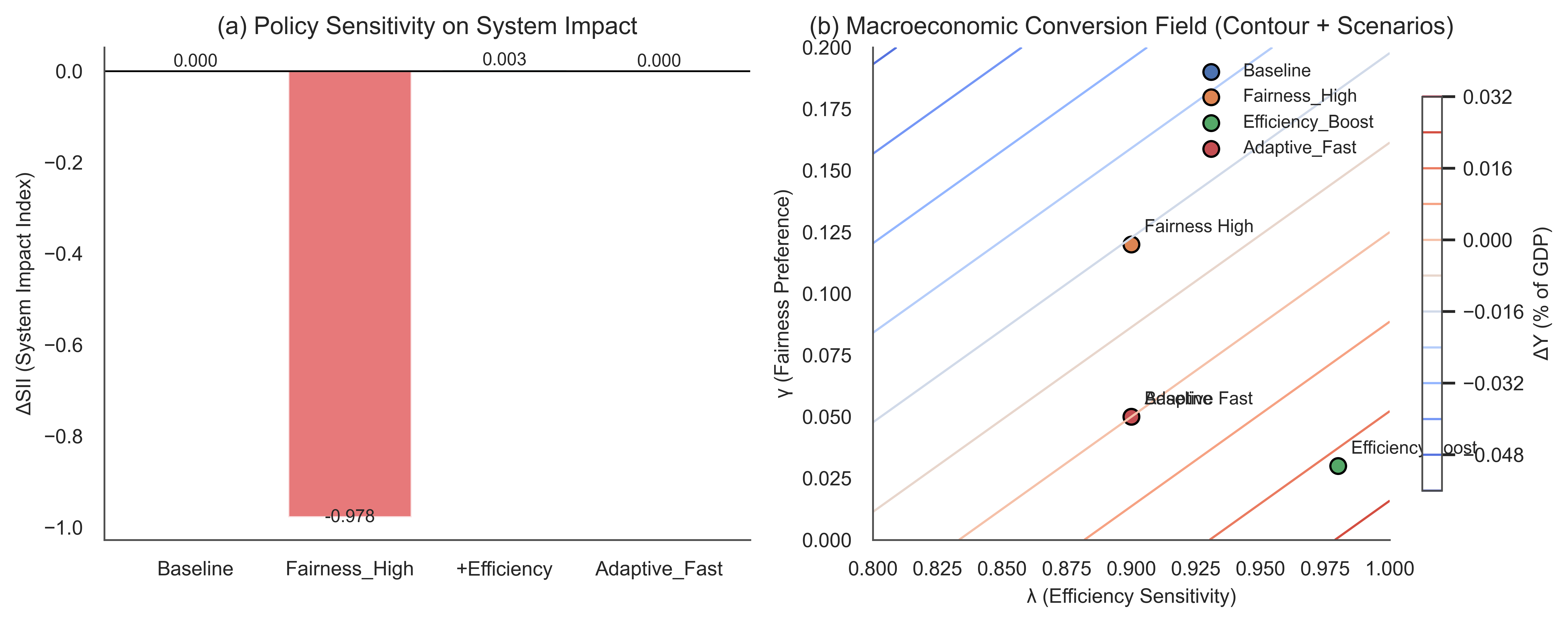}
\caption{Policy sensitivity and macroeconomic conversion. 
(a) Policy scenarios’ effect on the $\Delta SII$, 
showing that fairness-intensive policies substantially reduce system-level efficiency 
while efficiency-oriented configurations yield small but positive gains.  
(b) Macroeconomic conversion field ($\lambda,\gamma \!\rightarrow\! \Delta Y$) 
expressed as percent of GDP, with contour lines denoting equal economic impact 
and markers locating the corresponding policy regimes.}
\label{fig:policy-sensitivity}
\end{figure}

Panel (a) demonstrates that fairness-oriented policies (fairness\_high) 
reduce the $SII$ by nearly one unit relative to baseline,
whereas efficiency-boosting policies increase it marginally.  
Panel (b) converts the same behavioral sensitivity field 
into GDP-equivalent terms using $\Delta Y = \alpha_{\text{health}}\Delta SII$, 
with $\alpha_{\text{health}} = 0.11$ representing the healthcare sector’s GDP share.
Contour gradients illustrate that greater efficiency sensitivity ($\lambda$)
corresponds to positive GDP contributions, while higher fairness preference ($\gamma$)
reduces economic output, reflecting a quantifiable equity–efficiency trade-off.

\subsection{Robustness and Local Stability}

Robustness was further examined by perturbing each behavioral parameter 
by $\pm10\%$ around its estimated equilibrium value.  
The resulting sensitivity estimates confirmed numerical stability and 
local convergence across all behavioral dimensions.  
Efficiency sensitivity ($\lambda$) remained tightly centered near 1.00, 
while fairness preference ($\gamma$) fluctuated around zero, 
and temporal responsiveness ($T$) converged near 1.00 with minimal variation.  
These results collectively indicate that the behavioral equilibrium is 
robust and structurally well-conditioned, with no evidence of numerical drift 
or local instability under Monte Carlo perturbations.

\begin{table}[htbp]
\caption{Summary statistics of behavioral parameters under robustness test}
\label{tab:robust_summary}
\centering
\small
\begin{tabular}{@{}lrrrr@{}}
\toprule
 & \textbf{Mean} & \textbf{Std.} & \textbf{Min} & \textbf{Max}\\
\midrule
$\lambda$ & 1.0003 & 0.0477 & 0.8888 & 1.1192\\
$\gamma$  & $-0.0006$ & 0.0047 & $-0.0139$ & 0.0115\\
$T$       & 0.9975 & 0.0195 & 0.9468 & 1.0451\\
\botrule
\end{tabular}
\end{table}

Monte Carlo standard deviations remained below 0.05 for all par ameters, 
and none exhibited divergence across iterations, confirming that 
the estimated equilibrium $(\lambda, \gamma, T)$ is numerically 
robust and locally stable.

\subsection{Reproducibility and Code Availability}

All simulations were implemented in \texttt{Python 3.10} 
using \texttt{NumPy}, \texttt{Pandas}, and \texttt{Matplotlib}.  
All random seeds and hyperparameter schedules were fixed 
to ensure full reproducibility and comparability.  
Upon publication, the complete repository, including all configuration files, 
simulation codes, and figure-generation scripts, 
will be made publicly available to ensure transparency and replicability.

\end{appendices}

\section*{Declarations}\label{sec:declarations}
\noindent\textbf{Funding:}  
No funds, grants, or other support was received for this study.\\[4pt]

\noindent\textbf{Competing Interests:}  
The authors have no relevant financial or non-financial interests to disclose.\\[4pt]

\noindent\textbf{Ethics Approval:}  
This study analyzed secondary, de-identified data and therefore did not require ethics approval according to institutional policies.\\[4pt]

\noindent\textbf{Data Availability:}  
The datasets and simulation codes used in this study are fully available in a public Zenodo repository.  The repository contains all preprocessed data, configuration scripts, and robustness simulation files described in the manuscript. The complete package is accessible at \href{https://zenodo.org/records/17439497}{https://zenodo.org/records/17439497},  
entitled \textit{“Behavioral Industrial Health Simulation Data and Code (HCMS Study)”}.  
All data are synthetic or aggregated, and do not contain any sensitive or personally identifiable information.
\\[4pt]

\noindent\textbf{Author Contributions:}  
Conceptualization and methodology: Jinho Cha, Junyeol Ryu;  
Formal analysis and simulation: Junyeol Ryu, Justin Yu;  
Data curation and visualization: Junyeol Ryu, Justin Yu, Eunchan Daniel Cha, Hyeyoung Hwang;  
Writing – original draft: Jinho Cha;  
Writing – review and editing: Eunchan Daniel Cha, Hyeyoung Hwang;  
Supervision and project administration: Jinho Cha.  
All authors read and approved the final version of the manuscript.

\end{document}